\newcommand{\pbs}[1]{\let\temp=\\#1\let\\=\temp}
\numberwithin{equation}{section}
\def\be{\begin{equation}}\def\ee{\end{equation}}
\def\cvp{\raise 2pt\hbox{,}}
 \def\d{{\rm d}} 
\def\la{\lambda}
\def\arctanh{\mathop{\text{arctanh}}\nolimits}
\def\XXint#1#2#3{{\setbox0=\hbox{$#1{#2#3}{\int}$}
     \vcenter{\hbox{$#2#3$}}\kern-.515\wd0}}
\def\XXoint#1#2#3{{\setbox0=\hbox{$#1{#2#3}{\int}$}
     \vcenter{\hbox{$#2#3$}}\kern-.515\wd0}}
\DeclareSymbolFont{matha}{OML}{txmi}{m}{it}
\DeclareMathSymbol{\varv}{\mathord}{matha}{118}
\def\disk{\mathscr D}
\newtheorem{lemma}{Lemma}[section]
\newtheorem{proposition}{Proposition}[section]
\def\npb#1#2#3{{\it Nucl.\ Phys.\ }{\bf B #1} (#2) #3}
\def\jhep#1#2#3{{\it J. High Energy Phys.\ }{\bf #1} (#2) #3}
\def\prd#1#2#3{{\it Phys.\ Rev.\ }{\bf D #1} (#2) #3}
\def\cmp#1#2#3{{\it Comm.\ Math.\ Phys.\ }{\bf #1} (#2) #3}
\def\imath#1#2#3{{\it Invent math }{\bf #1} (#2) #3}
\begin{document}
{\pagestyle{empty}
\parskip 0in
\

\vfill
\begin{center}

{\LARGE Neumann scalar determinants \\
\vspace{0.3cm}on constant curvature disks}

\vspace{0.4in}

Soumyadeep \textsc{Chaudhuri} 

\medskip
{\it Service de Physique Th\'eorique et Math\'ematique\\
Universit\'e Libre de Bruxelles (ULB) and International Solvay Institutes\\
Campus de la Plaine, CP 231, B-1050 Bruxelles, Belgique}

\smallskip
{\tt chaudhurisoumyadeep@gmail.com}
\end{center}
\vfill\noindent

Working in the $\zeta$-function regularisation scheme, we find certain infinite series representations of the logarithms of massive scalar determinants, $\det(\Delta+m^{2})$ for arbitrary $m^2$, on finite round disks of constant curvature ($R=\frac{2\eta}{L^2}, \eta=0,\pm1$) with Neumann boundary conditions. The derivation of these representations relies on a relation between the Neumann determinants on the disks and the corresponding Dirichlet determinants via the determinants of the Dirichlet-to-Neumann maps on the boundaries of the disks. We corroborate the results in an appendix by computing the Neumann determinants in an alternative way. In the cases of disks with nonzero curvatures, we show that the infinite series representations reduce to  exact expressions for some specific values of $m^2$, viz. $m^2=-\frac{\eta}{L^2}q(q+1)$ with $q\in \mathbb {N}$. Our analysis uses and extends the results obtained in \cite{ChauFerdet} for similar Dirichlet determinants on constant curvature disks.

\vfill

\medskip

\newpage\pagestyle{plain}
\baselineskip 16pt
\setcounter{footnote}{0}

\tableofcontents

\section{\label{intro}Introduction}

The evaluation of functional determinants of differential operators such as Laplacians (and their generalisations) defined on different manifolds has a long and rich history. Such determinants show up in a variety of physical problems including problems in quantum field theory and quantum gravity \cite{Dunne, DHokerPhong1}.  For instance, in case of a free massive scalar field on a closed $d$-dimensional Riemannian manifold, the partition function is given by
\be
\label{partition fn expr.}
Z=\frac{1}{\sqrt{\det(\Delta+m^2)}},
\ee
where $\Delta$ is the positive Laplacian for the background metric and $m$ is the mass of the scalar field.  Computing such a determinant requires working in some regularisation scheme. The $\zeta$-function regularisation \cite{RaySinger, DowkerCritchley, Hawking} is  a convenient scheme for doing such computations and it has been employed extensively in the literature (see, for example \cite{Sarnak, DHokerPhong2, Voros, bfref}). When the manifold has boundaries, one needs to impose some boundary conditions on the field. For Dirichlet or Neumann boundary conditions, the partition function is given by the same expression as in \eqref{partition fn expr.} and the determinant in this expression can still be evaluated in the $\zeta$-function regularisation scheme, but its value depends on the specific boundary conditions.

Such determinants on manifolds with boundaries have been explored for spaces with negative curvature in the infinite volume limit, particularly in the context of holography (see, for example, \cite{CamporesiHiguchi, Helgason, BorthwickJudgePerry, DFPRS} and their references). At finite volume, they are, in general, difficult to evaluate. However, in two dimensions,  when  $m^2=0$, the problem becomes remarkably simple as the determinant is related to a conformal invariant \cite{Weis}. There are also several results for massive determinants on two dimensional Riemannian surfaces with geodesic boundaries (see,  for instance, \cite{GuillarmouGuillope, KMW}).
When the boundary of the surface is not a geodesic and the mass is nonzero, evaluating the determinants is typically challenging. Nonetheless, for sufficiently simple geometries one can make progress.

 A particular instance of such simple geometries is a round disk $\disk$ with constant bulk curvature which is described by the  metric
\be\
\label{const curv metric}
ds_\eta^2=e^{2\sigma_\eta}(dr^2+r^2d\theta^2),\ r\in[0,1],\ \theta\sim\theta+2\pi,
\ee
where  $\eta$ can take the values $0$, $+1$ or $-1$, and accordingly,
\be
\label{conformal factors for metrics}
 \begin{split}
& e^{2\sigma_0}=\Big(\frac{\ell_0}{2\pi}\Big)^2,\\
& e^{2\sigma_+}=\frac{4r_0^2L^2}{(1+ r_0^2 r^2)^2}\ \text{with}\ r_0\in(0,\infty),\\
& e^{2\sigma_-}=\frac{4r_0^2L^2}{(1- r_0^2 r^2)^2}\ \text{with}\ r_0\in(0,1).
\end{split}
\ee
Here $\ell_0$ and $L$ are positive quantities with the dimension of length. The Neumann determinants on these disks will be the objects of our study in this work. We  provide a few geometrical properties of these disks below that will be useful for the analysis done later in the paper. The length of the boundary ($\partial\disk$) of the disk  is given by
\be
 \begin{split}
& \ell_\eta=\begin{cases}\ell_0\ \ \ \ \text{for}\ \eta=0,\\ \frac{4\pi r_0 L}{1+\eta r_0^2 }\ \text{for}\ \eta=\pm 1, \end{cases}
\end{split}
\ee
while the area of the disk is given by
\be
\label{area of the disks}
 \begin{split}
& A_\eta=\begin{cases}\frac{\ell_0^2}{4\pi}\ \ \ \ \  \text{for}\ \eta=0,\\ \frac{4\pi r_0^2 L^2}{1+\eta r_0^2}\ \text{for}\ \eta=\pm 1. \end{cases}
\end{split}
\ee
The constant bulk curvature scalar of the disk is $R_\eta=2\eta/L^2$, while the extrinsic curvature at the boundary $\partial \disk$ is
\be 
\label{extrinsic curv: values}
k_\eta=\begin{cases}\frac{2\pi}{\ell_0}\ \ \ \   \text{for}\ \eta=0,\\\frac{1-\eta r_0^2}{2 r_0 L}\  \text{for}\ \eta=\pm 1.\end{cases}
\ee
We will denote the positive Laplacian corresponding to the disk by $\Delta_\eta$, and it is given by
\be 
\label{extrinsic curv: values}
\Delta_\eta=-e^{-2\sigma_\eta}\Big(\partial_r^2+\frac{1}{r}\partial_r+\frac{1}{r^2}\partial_\theta^2\Big).
\ee

For the above constant curvature disks, the Dirichlet determinant ${\det}_{\text D}(\Delta_\eta+m^2)$ was evaluated in \cite{ChauFerdet} using $\zeta$-function regularisation. For general values of the mass, the logarithm of this determinant was expressed as a convergent infinite series. Furthermore, for some special values of $m^2$, viz, $m^2=-\frac{\eta}{L^2} q(q+1)$ with $q\in\mathbb{N}$, exact expressions were obtained in terms of  certain elementary functions and the Euler Gamma function. These expressions were found to be useful in the analysis of Liouville and Jackiw-Teitelboim gravities \cite{Loopcalc}.

In this paper we will extend the above results by computing the Neumann determinants of $(\Delta_\eta+m^2)$ on the constant curvature round disks defined by the metrics given in \eqref{const curv metric}. To do such a computation, one may follow the strategy discussed in \cite{ChauFerdet} while adapting it to the altered boundary condition, i.e. Neumann instead of Dirichlet. This would involve exploiting the rotational symmetry of the geometry to first find the determinants of Sturm-Liouville operators corresponding to each Fourier mode (for the angular coordinate $\theta$) of the scalar field, and then obtaining the determinant of $(\Delta_\eta+m^2)$ from these Sturm-Liouville determinants. 
We will present this procedure in an appendix. However, in the main text of this paper we will follow a different approach that is somewhat shorter. This approach relies on a relation  between the Neumann and Dirichlet determinants which involves the determinant of the Dirichlet-to-Neumann map on the boundary of the disk. Such relations were recently derived  in \cite{KirstLee1} for more general setups (in arbitrary dimensions and for general metrics). We will show that in the case of the constant curvature disks this relation is particulary simple. Given this relation and the fact that the Dirichlet determinants have already been evaluated in \cite{ChauFerdet}, the computation of the Neumann determinants would  require just the evaluation of the determinants of the Dirichlet-to-Neumann maps. Performing this computation, we will derive an infinite series representation of $\ln{\det}_{\text N}(\Delta_\eta+m^2)$ for general values of $m^2$, and verify that this matches with the result obtained by following the strategy laid out in \cite{ChauFerdet}. In case of the curved disks $(\eta=\pm 1)$, we will then obtain exact expressions of the Neumann determinants for $m^2=-\frac{\eta}{L^2}q(q+1),\ q\in\mathbb{N}$, that are analogous to those obtained for the Dirichlet determinants in \cite{ChauFerdet}.

In an appendix we will also consider an example where the determinant with arbitrary mass can be evaluated exactly. This example corresponds to the positive curvature disk that is a hemisphere. In this case the boundary is a geodesic, and hence we can make a connection with the results in existing literature (in particular, \cite{KMW}). We will derive a very simple expression for the Neumann determinant (for arbitrary $m^2$) on the hemisphere in terms of the Barnes $G$-function that is analogous to the expression of the Dirichlet determinant on the hemisphere evaluated in \cite{ChauFerdet}. We will also show that this determinant vanishes for the special values of $m^2$ that we mentioned above, i.e. for $m^2=-\frac{1}{L^2}q(q+1),\ q\in\mathbb{N}$. 

The organisation of the paper is as follows:

In section \ref{sec: rel between Neumann and Dirichlet dets} we introduce the aforementioned relation between the Neumann and the Dirichlet determinants via the determinant of the Dirichlet-to-Neumann map. In section \ref{sec: det DN map} we evaluate the determinants of the Dirichlet-to-Neumann maps for the constant curvature disks. In section \ref{sec: Determinant for general masses}, we put together everything and use the Dirichlet determinants derived in \cite{ChauFerdet} to obtain the infinite series representations of the logarithms of the Neumann determinants for arbitrary masses.  In section \ref{sec: Neumann det special masses} we derive the aforementioned exact expressions for the Neumann determinants on the disks with nonzero curvature for the special values of  the mass, i.e. $m^2=-\frac{\eta}{L^2}q(q+1),\ q\in\mathbb{N}$. In appendix \ref{appendix: positivity of operators}, for the sake of completeness, we review some arguments for the positivity of the Laplacian with  Neumann and Dirichlet boundary conditions, as well as the positivity of the Dirichlet-to-Neumann map. In appendix \ref{lowest ev of DN map}, we derive the behaviour of the lowest eigenvalue of the Dirichlet-to-Neumann map as $m^2\rightarrow 0$ which is required for the discussion in section \ref{sec: rel between Neumann and Dirichlet dets}. In appendix \ref{app: alternative approach to Neumann det} we verify the infinite series representations obtained in section \ref{sec: Determinant for general masses} through an alternative derivation of the Neumann determinants. In appendix \ref{appendix: hemisphere}, we find an exact expression of the Neumann determinant on a hemisphere for arbitrary mass and show that it vanishes when $m^2=-\frac{1}{L^2}q(q+1)$ with $q\in\mathbb{N}$. 

\section{Relation between the Neumann and Dirichlet determinants via the Dirichlet-to-Neumann map}

\label{sec: rel between Neumann and Dirichlet dets}

In this section we will introduce the result derived in \cite{KirstLee1}  that relates the difference between the Neumann determinant and the Dirichlet determinant to the determinant of the  Dirichlet-to-Neumann map for manifolds with a boundary in arbitrary dimensions.  We will then show that this relation is particularly simple for the manifolds of our interest, viz.  two-dimensional disks of constant curvature. 

\subsection{The relation for general values of the mass}

Consider a $d$-dimensional smooth compact Riemannian manifold $\mathscr{M}$ with a connected boundary $\partial \mathscr{M}$. Let us denote the metric on $\mathscr{M}$ by $g_{\mu\nu}$ and the induced metric on $\partial \mathscr{M}$ by $h_{ab}$. The positive Laplacian corresponding to $g_{\mu\nu}$ is $\Delta=-\frac{1}{\sqrt{g}}\partial_\mu(\sqrt{g}g^{\mu\nu}\partial_\nu)$. The eigenvalues of $\Delta$ would depend on the boundary condition. Accordingly, we denote the eigenvalues for Dirichlet and Neumann boundary conditions by $\la_i^{(\text D)}$ and $\la_i^{(\text N)}$ respectively, with the index $i$ running over all non-negative integers. These eigenvalues are ordered such that $\la_i^{(\text D)}\leq \la_{i+1}^{(\text D)}$ and $\la_i^{(\text N)}\leq \la_{i+1}^{(\text N)}$. Note that there can be degeneracies in the eigenvalues and in that case, the same value will  be assigned to multipe $\la_{i+1}^{(\text D/\text N)}$'s. 

Using these eigenvalues, one can define the following zeta functions:
\be
\label{zeta fn def: Laplacian Dirchlet and Neumann}
\zeta^{(\text D)}(m^2;s)=\sum_{i=0}^\infty\frac{1}{(\la_i^{(\text D)}+m^2)^s},\ \zeta^{(\text N)}(m^2;s)=\sum_{i=0}^\infty\frac{1}{(\la_i^{(\text N)}+m^2)^s}.
\ee
These series representations of the above $\zeta$-functions are valid only when the real part of $s$ is sufficiently large, but the functions themselves can be meromorphically extended to the rest of the complex plane, with these  extended functions being regular at $s=0$. We will implicitly assume this to be true also for all the other $\zeta$-functions that will be introduced later in this paper.

The $\zeta$-regularised Dirichlet and Neumann determinants of $(\Delta+m^2)$ are  defined in terms of the above $\zeta$-functions by the following relations:
\be
\label{Dirichlet Neuman det def}
\ln {\det}_{\text D} (\Delta+m^2)=-\partial_s \zeta^{(\text D)}(m^2;s)|_{s=0},\ \ln {\det}_{\text N} (\Delta+m^2)=-\partial_s \zeta^{(\text N)}(m^2;s)|_{s=0}.
\ee

Now, let us introduce the notion of the Dirichlet-to-Neumann map which is a pseudo-differential operator acting on functions at the boundary. Suppose $f$ is a smooth function defined on $\partial \mathscr{M}$. We can define  a smooth extension of this function to the bulk, which we denote by $H_{m^2}[f]$, that satisfies the following conditions:
\be
(\Delta+m^2)H_{m^2}[f]=0,\ H_{m^2}[f]|_{\partial \mathscr{M}}=f.
\ee
The Dirichlet-to-Neumann map, denoted by $\mathscr{O}_{m^2}$, is defined via the following action on $f$:
\be
\mathscr{O}_{m^2}[f]=\partial_n H_{m^2}[f]|_{\partial \mathscr{M}},
\ee
where $\partial_n$ is the derivative along the outward-pointing unit normal at the boundary. Suppose the eigenvalues of this operator are $v_i^{(m^2)}$, $i\in\mathbb{N}_0$, which we again choose to be arranged in an ascending order. We can define a $\zeta$-function with these eigenvalues as follows:
\be
\label{zeta fn def: DN map}
\zeta^{(\text{DN})}(m^2;s)=\sum_{i=0}^\infty \frac{1}{(v_i^{(m^2)})^s}.
\ee
The $\zeta$-regularised determinant of the Dirichlet-to-Neumann map $\mathscr{O}_{m^2}$ is then defined by
\be
\label{DN map det def}
\ln {\det}(\mathscr{O}_{m^2})=-\partial_s\zeta^{(\text{DN})}(m^2;s)|_{s=0}.
\ee

Now, the relation between the determinants defined in \eqref{Dirichlet Neuman det def} and \eqref{DN map det def}  that was proven in \cite{KirstLee1} is as follows:
\be
\label{DirNeurel:gen dim}
\ln {\det}_{\text N} (\Delta+m^2)- \ln {\det}_{\text D} (\Delta+m^2)=\sum_{j=0}^\nu (m^2)^j b_j+\ln {\det}(\mathscr{O}_{m^2}),
\ee
where $\nu=[\frac{d-1}{2}]$.\footnote{$[x]$ stands for the greatest integer that is less than or equal to $x$.} Here the coefficients $b_j$ are independent of $m^2$, and depend only on the geometry of the space. They are determined by the coefficients in the expansions of the Dirichlet and Neumann heat kernels corresponding to the bulk Laplacian $\Delta$ as well as the coefficients in the large $m$-expansion of the determinant of the Dirichlet-to-Neumann map \cite{KirstLee2}. The physical significance of relations of the form \eqref{DirNeurel:gen dim} will be discussed in \cite{ChauFerNeufree}.

Note that when $d=2$, which is the case of our interest, we have $\nu=0$. Thus, for two-dimensional manifolds with connected boundaries, the relation \eqref{DirNeurel:gen dim} reduces to
\be
\label{DirNeurel:two dim}
\ln {\det}_{\text N} (\Delta+m^2)- \ln {\det}_{\text D} (\Delta+m^2)=b_0+\ln {\det}(\mathscr{O}_{m^2}).
\ee
As we mentioned earlier, $b_0$ is independent of the mass. So, it can be evaluated by comparing the determinants in \eqref{DirNeurel:two dim}  as $m^2\rightarrow0$. .This massless limit was also considered in \cite{Wentworth} and a general form of the quantity $b_0$ was found for arbitrary two-dimensional surfaces with a boundary. In what follows, we will present a derivation of $b_0$ for the constant curvature disks and verify that our result matches with that of \cite{Wentworth}.

\subsection{The massless limit}

To consider the massless limit, let us first note that for the Neumann boundary condition, the lowest eigenvalue of $\Delta_\eta$ is zero\footnote{The other eigenvalues are positive-definite as shown in appendix \ref{appendix: positivity of operators}.}, and the corresponding eigenfunction is the constant function on the disk $\disk$. Therefore, in this case, the lowest  eigenvalue of the operator $(\Delta_\eta+m^2)$ is $m^2$, and  it  goes to zero as  $m^2\rightarrow 0$. Similarly, as  argued in appendix \ref{lowest ev of DN map}, the lowest eigenvalue of $\mathscr{O}_{m^2}$ goes to zero when $m^2\rightarrow 0$ and for small values of $m^2$, it is given by  
\be
v_0^{(m^2)}=\frac{m^2 A_\eta}{\ell_\eta}  +O(m^4).
\ee
The vanishing of these eigenvalues in the $m^2\rightarrow 0$ limit means that $\ln {\det}_{\text N} (\Delta_\eta+m^2)$ and $\ln {\det}(\mathscr{O}_{m^2})$ diverge in this limit. However, one can regularise these divergences by removing the contributions of $\ln(\la_0^{(\text N)}+m^2)=\ln(m^2)$ and $\ln v_0^{(m^2)}\approx\ln\Big(\frac{m^2 A_\eta}{\ell_\eta}\Big)$ as shown below:
\be
\label{reg Neuman DN det def}
\begin{split}
& \ln {\det}_{\text N}'(\Delta_\eta)=\lim_{m^2\rightarrow 0}\Big[\ln {\det}_{\text N} (\Delta_\eta+m^2)-\ln(m^2)\Big],\\
 & \ln {\det}'(\mathscr{O}_{0})=\lim_{m^2\rightarrow 0}\Big[ \ln {\det}(\mathscr{O}_{m^2})-\ln\Big(\frac{m^2 A_\eta}{\ell_\eta}\Big)\Big].
 \end{split}
\ee
This is actually equivalent to removing the lowest eigenvalues  while defining the Neumann $\zeta$-function  for the Laplacian (see \eqref{zeta fn def: Laplacian Dirchlet and Neumann}) and the $\zeta$-function for the Dirichlet-to-Neumann map (see \eqref{zeta fn def: DN map}) in the massless case. The quantities in \eqref{reg Neuman DN det def} are then given by the derivatives of these $\zeta$-functions just as shown in \eqref{Dirichlet Neuman det def} and \eqref{DN map det def}.

Let us note that the difference of $\ln {\det}_{\text N} (\Delta_\eta+m^2)$ and $\ln {\det}(\mathscr{O}_{m^2})$ has a well-defined limit as $m^2\rightarrow 0$ which is given by
\be
\lim_{m^2\rightarrow 0}\Big[\ln {\det}_{\text N} (\Delta_\eta+m^2)- \ln {\det}(\mathscr{O}_{m^2})\Big]= \ln {\det}_{\text N}'(\Delta_\eta)- \ln {\det}'(\mathscr{O}_{0})-\ln\Big(\frac{A_\eta}{\ell_\eta}\Big).
\ee
Moreover, the logarithm of Dirichlet determinant, $\ln {\det}_{\text D} (\Delta_\eta+m^2)$,  has a well-defined limit as $m^2\rightarrow 0$ because the Dirichlet boundary condition does not allow for a nonzero constant mode and consequently there is no zero eigenvalue of the Laplacian. 

Combining all of the above  observations, we can take the limit $m^2\rightarrow 0$ in  equation \eqref{DirNeurel:two dim} to obtain the following expression for $b_0$:
\be
\label{b0 for const curv disks}
b_0=\ln {\det}_{\text N}' (\Delta_\eta)- \ln {\det}_{\text D} (\Delta_\eta)-\ln {\det}'(\mathscr{O}_{0})-\ln \Big(\frac{ A_\eta}{\ell_\eta}\Big).
\ee
 To evaluate this quantity for the different values of $\eta$, let us note that each of these constant curvature disks is related by a Weyl transformation to a flat disk of unit radius as is manifest from the form of the metrics given in \eqref{const curv metric}. We will indicate all quantities related to this flat disk of unit radius with a subscript $\delta$.  Now, in \cite{Weis} it was shown that there are two Weyl-invariant quantities which can be constructed out of the massless determinants ${\det}_{\text N}'(\Delta)$ and ${\det}_{\text D}(\Delta)$. In our case, these two Weyl-invariants imply
 \be
 \label{Weyl-invariant Neumann}
 \begin{split}
 &\ln {\det}_{\text N}' (\Delta_\eta)-\ln(A_\eta)+\frac{S_L[\sigma_\eta]}{12\pi}-\frac{1}{4\pi}\oint_{\partial\disk}ds_\eta \ k_\eta\\
 &= \ln {\det}_{\text N}' (\Delta_\delta)-\ln(A_\delta)-\frac{1}{4\pi}\oint_{\partial\disk}ds_\delta \ k_\delta,\\
 \end{split}
 \ee
 and
  \be
 \label {Weyl-invariant Dirichlet}
 \ln {\det}_{\text D}(\Delta_\eta)+\frac{S_L[\sigma_\eta]}{12\pi}+\frac{1}{4\pi}\oint_{\partial\disk}ds_\eta \ k_\eta= \ln {\det}_{\text D} (\Delta_\delta)+\frac{1}{4\pi}\oint_{\partial\disk}ds_\delta \ k_\delta,
 \ee
 where $S_L[\sigma_\eta]$ is the Liouville action defined as
 \be
 \label{Liouville action}
 S_L[\sigma_\eta]=\int_{\disk} d^2x\ \partial_\mu\sigma_\eta\partial_\mu\sigma_\eta+2\oint_{\partial \disk}ds_\delta\ \sigma_\eta =2\pi\int_0^1 dr\ r(\partial_r\sigma_\eta)^2+4\pi\sigma_\eta|_{r=1}.
 \ee
 Here, $ds_\eta$ and $ds_\delta$ are the  measures for the arc-length at the boundary $\partial\disk$ for the respective metrics so that  $\oint_{\partial\disk}ds_\eta=\ell_\eta$ and $\oint_{\partial\disk}ds_\delta=2\pi$. $k_\eta$ is the extrinsic curvature at the boundary of the constant curvature disk. Its value is given in \eqref{extrinsic curv: values}. Similarly, $k_\delta=1$ is the  extrinsic curvature at the boundary of the flat disk with unit radius.  $A_\delta=\pi$ is the area  of the flat disk with unit radius.
 
 Combining the relations in \eqref{Weyl-invariant Neumann} and \eqref{Weyl-invariant Dirichlet}, we get
 \be
 \label{Weyl-invariant bulk}
 \begin{split}
\ln {\det}_{\text N}' (\Delta_\eta)- \ln {\det}_{\text D} (\Delta_\eta)-\ln A_\eta-\frac{1}{2\pi}\oint_{\partial\disk}ds_\eta \ k_\eta\\
=\ln {\det}_{\text N}' (\Delta_\delta)- \ln {\det}_{\text D} (\Delta_\delta)-\ln A_\delta-\frac{1}{2\pi}\oint_{\partial\disk}ds_\delta \ k_\delta.
\end{split}
\ee
 The values of ${\det}_{\text N}' (\Delta_\delta)$ and ${\det}_{\text D}(\Delta_\delta)$ were computed in \cite{Weis}, and they are given by
\be
\label{massless Neumann det: flat unit disk}
\ln {\det}_{\text N}' (\Delta_\delta)=-\frac{1}{6}\ln(2)+\frac{1}{2}\ln(\pi)-2\zeta_{\text R}'(-1)+\frac{7}{12},
\ee
\be
\label{massless Dirichlet det: flat unit disk}
\ln {\det}_{\text D}(\Delta_\delta)=-\frac{1}{6}\ln(2)-\frac{1}{2}\ln(\pi)-2\zeta_{\text R}'(-1)-\frac{5}{12},
\ee 
where $\zeta_{\text R}'(-1)$ denotes the derivative of the Riemann zeta function evaluated at $-1$. Combining these expressions and subtracting the values of $\ln A_\delta$ and $\frac{1}{2\pi}\oint_{\partial\disk}ds_\delta \ k_\delta$, we get
 \be
\ln {\det}_{\text N}' (\Delta_\delta)- \ln {\det}_{\text D} (\Delta_\delta)-\ln A_\delta-\frac{1}{2\pi}\oint_{\partial\disk}ds_\delta \ k_\delta=0.
\ee
Hence, from \eqref{Weyl-invariant bulk}, we find that
 \be
 \label{diff of massless Neumann and Dirichlet dets}
\ln {\det}_{\text N}' (\Delta_\eta)- \ln {\det}_{\text D} (\Delta_\eta)-\ln A_\eta=\frac{1}{2\pi}\oint_{\partial\disk}ds_\eta \ k_\eta.
\ee

Finally, we also need the determinant of the Dirichlet-to-Neumann map $\mathscr{O}_0$ to complete the computation of $b_0$ given in \eqref{b0 for const curv disks}. For this determinant, we use the general result proven in \cite{GuillarmouGuillope} that in case of 2-dimensional surfaces with a connected boundary, the ratio of the determinant of the Dirichlet-to-Neumann map (${\det}'(\mathscr{O}_0)$) with the length of the boundary is a Weyl-invariant. It was also shown in \cite{EdwardWu, GuillarmouGuillope} that for the conformal class of the flat disk, this Weyl-invariant is just $1$. Hence,  for the constant curvature disks of our interest, we have
 \be
 \label{zero mass det DN map}
\ln{\det}'(\mathscr{O}_0)=\ln\ell_\eta.
\ee
We will verify this explicitly in section \ref{sec: det DN map}.

Using the equations \eqref{diff of massless Neumann and Dirichlet dets} and \eqref{zero mass det DN map}, we get the following value of $b_0$ from \eqref{b0 for const curv disks}:
\be
 \label{b0 for const curv disks final expr}
b_0=\frac{1}{2\pi}\oint_{\partial\disk}ds_\eta \ k_\eta=\frac{k_\eta\ell_\eta}{2\pi}=\begin{cases}1\ \ \ \ \ \ \ \ \ \  \text{for}\ \eta=0,\\1-\frac{\eta A_\eta}{2\pi L^2 }\  \text{for}\ \eta=\pm 1.\end{cases}
\ee
We note that this is exactly the value of $b_0$ that can be obtained from Proposition 2.2 in \cite{Wentworth}.

\section{Evaluation of the determinant of the Dirichlet-to-Neumann map}

\label{sec: det DN map}

Having introduced the relation \eqref{DirNeurel:two dim} and computed the quantity $b_0$ for the constant curvature disks, let us now evaluate the determinant of the Dirichlet-to-Neumann map $\mathscr{O}_{m^2}$.

\subsection{Flat case}

Let us first consider the flat case, i.e. $\eta=0$. To find the eigenvalues of the operator $\mathscr{O}_{m^2}$ in this case, we need to find (real) solutions to the Steklov problem,
\be
\label{Steklov problem: flat case}
(\Delta_0+m^2)\Phi=0,\ \partial_n\Phi|_{\partial\disk}=\varv\Phi|_{\partial\disk},
\ee
for some $\varv\geq 0$. The  normal derivative at the boundary and the Laplacian are given in the polar coordinates introduced in \eqref{const curv metric} by
\be
\partial_n=e^{-\sigma_0}\partial_r\Big|_{\partial\disk}=\frac{2\pi}{\ell_0}\partial_r\Big|_{\partial\disk},\ \Delta_0=-e^{-2\sigma_0}\Big[\partial_r^2+\frac{1}{r}\partial_r+\frac{1}{r^2}\partial_\theta^2\Big]=-\frac{4\pi^2}{\ell_0^2}\Big[\partial_r^2+\frac{1}{r}\partial_r+\frac{1}{r^2}\partial_\theta^2\Big].
\ee

To find the solutions to the above problem, let us decompose the field $\Phi$ into its Fourier modes:
\be
\label{Fourier decomp.: bulk field}
\Phi(r,\theta)=\sum_{j=-\infty}^\infty\Phi_j(r)e^{ij\theta}.
\ee
Since we are looking for real solutions, we demand that $\Phi_j(r)^*=\Phi_{-j}(r)$. Moreover, let  $\varphi(\theta)=\Phi(1,\theta)$ be the boundary value of $\Phi$. The Fourier decomposition of this boundary field is $\varphi$
\be
\label{Fourier decomp.: bdy field}
\varphi(\theta)=\sum_{j=-\infty}^\infty\varphi_j e^{ij\theta},
\ee
where $\varphi_j=\Phi_j(1)$. Then the Steklov problem posed in \eqref{Steklov problem: flat case} reduces to seeking solutions to the equation
\be
\label{Steklov problem Fourier mode diff eqn: flat case}
\Phi_j''(r)+\frac{1}{r}\Phi_j'(r)-\frac{j^2}{r^2}\Phi_j(r)=\frac{m^2\ell_0^2}{4\pi^2}\Phi_j(r),
\ee
with the boundary condition
\be
\label{Steklov problem Fourier mode bdy cond.: flat case}
 \frac{2\pi}{\ell_0}\Phi_j'(1)=\varv\varphi_j,\ \text{for all}\ j\in\mathbb{Z}.
\ee
The solution to the  differential equation \eqref{Steklov problem Fourier mode diff eqn: flat case} that is regular at $r=0$ is
\be
\Phi_j(r)=c_j  J_j\Big(\frac{i m\ell_0}{2\pi}r\Big),
\ee
for some $c_j\in\mathbb {C}$ with the reality of $\Phi$ requiring $c_j^*=c_{-j}$ when $m^2> 0$. Here $J_j$ is the Bessel function of order $j$. The boundary condition \eqref{Steklov problem Fourier mode bdy cond.: flat case}  can  be satisfied by this solution   only if
\be
\label{constraints on coefficients}
c_j\propto(\delta_{j,p}+\delta_{j,-p})\ \text{or}\ c_j\propto i(\delta_{j,p}-\delta_{j,-p})
\ee
for some $p\in\mathbb{Z}$ with the proportionality constant being a real number. Accordingly, the independent real solutions to the Steklov problem can be enumerated as follows:
\be
\Phi^{(p)}(r,\theta)
=\begin{cases}i^p J_p\Big(\frac{i m\ell_0}{2\pi}r\Big)\cos(p\theta) \ \text{when}\ p\geq 0,\\ i^p J_p\Big(\frac{i m\ell_0}{2\pi}r\Big)\sin(p\theta) \ \text{when}\ p< 0,\end{cases}\\
\ee
with $p\in\mathbb{Z}$. The corresponding values of $\varv$ are given by
\be
\varv^{(p)}(m^2)= \frac{2\pi}{\ell_0}\frac{\frac{d}{dr}[ J_p(\frac{i m\ell_0}{2\pi}r)]|_{r=1}}{J_p(\frac{i m\ell_0}{2\pi})}=\frac{2\pi}{\ell_0}\Big[-p+\frac{i m\ell_0}{2\pi}\frac{ J_{p-1}(\frac{i m\ell_0}{2\pi})}{J_p(\frac{i m\ell_0}{2\pi})}\Big].
\ee
One can check that $\varv^{(p)}(m^2)$ is symmetric under $p\rightarrow-p$ by using the identities $J_{-n}(z)=(-1)^n J_n(z)$ and $J_n(z)=\frac{2(n-1)}{z}J_{n-1}(z)-J_{n-2}(z)$ that are valid for any $n\in\mathbb{Z}$. Therefore, we can express $\varv^{(p)}(m^2)$ as
\be
\varv^{(p)}(m^2)
=\frac{2\pi}{\ell_0}\Big[-|p|+\frac{i m\ell_0}{2\pi}\frac{ J_{|p|-1}(\frac{i m\ell_0}{2\pi})}{J_{|p|}(\frac{i m\ell_0}{2\pi})}\Big].
\ee
These are the eigenvalues of the Dirichlet-to-Neumann map $\mathscr{O}_{m^2}$. They can be re-arranged in an ascending order to express them as $v_k^{(m^2)}$ with $k\geq 0$, the notation we had introduced earlier. However, from now on we will stick to the notation $\varv^{(p)}(m^2)$ with  $p\in \mathbb{Z}$ as it is more convenient for the present discussion.

Let us note here that massless limits of these eigenvalues are readily obtained as follows:
\be
\varv^{(p)}(0)
=\frac{2\pi}{\ell_0}\Big[-|p|+\lim_{x\rightarrow 0}\Big\{ix\frac{ J_{|p|-1}(i x)}{J_{|p|}(ix)}\Big\}\Big]=\frac{2\pi}{\ell_0}\Big[-|p|+2|p|\Big]=\frac{2\pi}{\ell_0}|p|.
\ee
From the eigenvalues of $\mathscr{O}_{m^2}$ and $\mathscr{O}_{0}$ given above, one can obtain the following infinite series representation of the logarithm of the ratio of the determinants of these operators:
\be
\label{series rep. DN map: flat}
\ln\Big(\frac{{\det}(\mathscr{O}_{m^2})}{{\det}'(\mathscr{O}_0)}\Big)=\ln\Big(\varv^{(0)}(m^2)\Big)+2\sum_{p=1}^\infty\ln\Big(\frac{\varv^{(p)}(m^2)}{\varv^{(p)}(0)}\Big).
\ee
The convergence of this series can be checked by using the asymptotic  expansion of the Bessel function at large order which is 
\be
\begin{split}
J_\nu(z)=\frac{1}{\sqrt{2\pi}}\exp\Big[\nu+\nu\ln(\frac{z}{2})-(\nu+\frac{1}{2})\ln\nu\Big]\Big[1-\frac{3z^2+1}{12\nu}+O(\frac{1}{\nu^2})\Big].
\end{split}
\ee
The above asymptotic expansion of the Bessel function yields $\frac{\varv^{(p)}(m^2)}{\varv^{(p)}(0)}=1+O(1/p^2)$, and hence, $\ln\Big(\frac{\varv^{(p)}(m^2)}{\varv^{(p)}(0)}\Big)=O(1/p^2)$ as $p\rightarrow\infty$. This ensures the convergence of the series in \eqref{series rep. DN map: flat}.
Substituting the values of $\varv^{(p)}(m^2)$ and $\varv^{(p)}(0)$ in \eqref{series rep. DN map: flat}, and then using the relation $I_\alpha(x)=i^{-\alpha}J_\alpha(i x)$ where $I_\alpha$ is a modified Bessel function of the first kind, we get
\be
\label{ratio of massive and massless det DN map: flat case}
\ln\Big(\frac{{\det}(\mathscr{O}_{m^2})}{{\det}'(\mathscr{O}_0)}\Big)
=\ln\Big(m\frac{ I_{-1}(\frac{m\ell_0}{2\pi})}{I_{0}(\frac{m\ell_0}{2\pi})}\Big)+2\sum_{p=1}^\infty\ln\Big(\frac{ \frac{ m\ell_0}{2\pi} I_{p-1}(\frac{m\ell_0}{2\pi})-p I_{p}(\frac{ m\ell_0}{2\pi})}{p I_{p}(\frac{ m\ell_0}{2\pi})}\Big).
\ee
Furthermore, from \eqref{zero mass det DN map} we know that $\ln{\det}'(\mathscr{O}_0)=\ln\ell_0$. We can now check this explicitly by considering the following $\zeta$-function:
\be
{\zeta^{(\text{DN})}}'(0;s)=\sum_{p\in\mathbb{Z},p\neq 0} (\varv^{(p)}(0))^{-s}=2 (\frac{2\pi}{\ell_0})^{-s}\sum_{p=1}^\infty p^{-s}=2 (\frac{2\pi}{\ell_0})^{-s}\zeta_{\text R}(s),
\ee
where $\zeta_{\text R}$ denotes the Riemann zeta function. This yields the expected result for the determinant ${\det}'(\mathscr{O}_0)$ as shown below:
\be
\label{massless det DN map check:flat case}
\begin{split}
\ln {\det}'(\mathscr{O}_0)
&=-\partial_s {\zeta^{(\text{DN})}}'(0;s)|_{s=0}=-2 \partial_s\Big[(\frac{2\pi}{\ell_0})^{-s}\zeta_{\text R}(s)\Big]\Big|_{s=0}\\
&=-2\Big[\ln(\frac{\ell_0}{2\pi})\zeta_{\text{R}}(0)+\zeta_{\text R}'(0)\Big]=-2\Big[-\frac{1}{2}\ln(\frac{\ell_0}{2\pi})-\frac{1}{2}\ln(2\pi)\Big]=\ln(\ell_0).
\end{split}
\ee
Here we have used the fact that $\zeta_{\text R}(0)=-\frac{1}{2}$ and $\zeta_{\text R}'(0)=-\frac{1}{2}\ln(2\pi)$.

Combining the results obtained in \eqref{ratio of massive and massless det DN map: flat case} and \eqref{massless det DN map check:flat case}, we get the following expression of the determinant of the Dirchlet-to-Neumann map:
\be
\ln {\det}(\mathscr{O}_{m^2})
=\ln(2\pi)+\ln\Big(\frac{\frac{m\ell_0}{2\pi} I_{-1}(\frac{m\ell_0}{2\pi})}{I_{0}(\frac{m\ell_0}{2\pi})}\Big)+2\sum_{p=1}^\infty\ln\Big(\frac{ \frac{ m\ell_0}{2\pi} I_{p-1}(\frac{m\ell_0}{2\pi})-p I_{p}(\frac{ m\ell_0}{2\pi})}{p I_{p}(\frac{ m\ell_0}{2\pi})}\Big).
\ee

\subsection{Curved cases}

Now, let us turn our attention to the curved cases, i.e. $\eta=\pm 1$. Evaluating the determinant of the Dirichlet-to-Neumann map in these cases is analogous to the same for the flat case. The Steklov problem involves finding real solutions to
\be
\label{Steklov problem: curved case}
(\Delta_\eta+m^2)\Phi=0,\ \partial_n\Phi|_{\partial\disk}=\varv\Phi|_{\partial\disk},
\ee
for some $\varv\geq 0$. The  normal derivative at the boundary and the Laplacian are
\be
\begin{split}
&\partial_n=e^{-\sigma_\eta}\partial_r\Big|_{\partial\disk}=\frac{1+ \eta r_0^2 }{2r_0 L}\partial_r\Big|_{\partial\disk},\\
& \Delta_\eta=-e^{-2\sigma_\eta}\Big[\partial_r^2+\frac{1}{r}\partial_r+\frac{1}{r^2}\partial_\theta^2\Big]=-\frac{(1+ \eta r_0^2 r^2)^2}{4r_0^2 L^2}\Big[\partial_r^2+\frac{1}{r}\partial_r+\frac{1}{r^2}\partial_\theta^2\Big].
\end{split}
\ee
To find the solutions to the Steklov problem, one can again decompose the bulk field and its boundary value as shown in \eqref{Fourier decomp.: bulk field} and \eqref{Fourier decomp.: bdy field}. Then the Steklov problem reduces to finding solutions to the ordinary differential equations
\be
\label{Steklov problem Fourier mode diff eqn: curved case}
\Phi_j''(r)+\frac{1}{r}\Phi_j'(r)-\frac{j^2}{r^2}\Phi_j(r)=\frac{4r_0^2 m^2 L^2}{(1+ \eta r_0^2 r^2)^2}\Phi_j(r),
\ee
with the boundary condition
\be
\label{Steklov problem Fourier mode bdy cond.: curved case}
 \frac{1+ \eta r_0^2 }{2r_0 L}\Phi_j'(1)=\varv\varphi_j,\ \ \text{for all}\ \  j\in\mathbb{Z}. 
\ee
The solution to  \eqref{Steklov problem Fourier mode diff eqn: curved case} that is regular at $r=0$ is
\be
\Phi_j(r)=c_j f_j^{(\eta)}(r;-m^2 L^2)
\ee
where
\be
\label{f_n def.}
f_j^{(\eta)}(r;\la)=r^{|j|}F\Big(\frac{1}{2}(1+\sqrt{1+4\eta\la}),\frac{1}{2}(1-\sqrt{1+4\eta\la}),|j|+1,\frac{\eta r_0^2 r^2}{1+\eta r_0^2 r^2}\Big),
\ee
with $F$ representing the Gaussian hypergeometric function $\ _2F_1$. Just as in the flat case, the boundary condition \eqref{Steklov problem Fourier mode bdy cond.: curved case} can be satisfied for all $j\in \mathbb{Z}$ only if the coefficients $\{c_j\}$ are of the forms given in \eqref{constraints on coefficients}. Accordingly, the independent real solutions to the above Steklov problem are 
\be
\Phi^{(p)}(r,\theta)
=\begin{cases}f_p^{(\eta)}(r;-m^2 L^2)\cos(p\theta) \ \text{when}\ p\geq 0,\\ f_p^{(\eta)}(r;-m^2 L^2)\sin(p\theta) \ \text{when}\ p< 0,\end{cases}\\
\ee
with $p\in\mathbb{Z}$. The corresponding values of $\varv$ are given by
\be
\label{eigenvalues of massive DN map: curved disks}
\varv^{(p)}(m^2)=  \frac{1+ \eta r_0^2 }{2r_0 L}\frac{g_p^{(\eta)}(-m^2 L^2)}{f_p^{(\eta)}(1;-m^2 L^2)},
\ee
where
\be
\label{g_n def.}
\begin{split}
&g_p^{(\eta)}(\la)
=\frac{\partial}{\partial r}[ f_p^{(\eta)}(r;\la)]|_{r=1}\\
&=|p|F\Big(\frac{1}{2}(1+\sqrt{1+4\eta\la}),\frac{1}{2}(1-\sqrt{1+4\eta\la}),|p|+1,\frac{\eta r_0^2 }{1+\eta r_0^2}\Big)\\
&\quad-\frac{2 \la r_0^2}{(|p|+1)(1+\eta r_0^2)^2}F\Big(\frac{1}{2}(3+\sqrt{1+4\eta\la}),\frac{1}{2}(3-\sqrt{1+4\eta\la}),|p|+2,\frac{\eta r_0^2}{1+\eta r_0^2 }\Big).
\end{split}
\ee
The quantities $\varv^{(p)}(m^2)$ with $p\in\mathbb{Z}$ are the eigenvalues of $\mathscr{O}_{m^2}$. The massless limits of these eigenvalues are given by
\be
\label{eigenvalues of massless DN map: curved disks}
\varv^{(p)}(0)
=\frac{1+ \eta r_0^2 }{2r_0 L}\frac{g_p^{(\eta)}(0)}{f_p^{(\eta)}(1;0)}=\frac{1+ \eta r_0^2 }{2r_0 L}|p|.
\ee
From these values of $\varv^{(p)}(m^2)$  and $\varv^{(p)}(0)$, we can again obtain an infinite series representation of the logarithm of the ratio of the determinants of $\mathscr{O}_{m^2}$ and $\mathscr{O}_0$ which is of the following form:
\be
\label{series rep. DN map: curved}
\ln\Big(\frac{{\det}(\mathscr{O}_{m^2})}{{\det}'(\mathscr{O}_0)}\Big)=\ln\Big(\varv^{(0)}(m^2)\Big)+2\sum_{p=1}^\infty\ln\Big(\frac{\varv^{(p)}(m^2)}{\varv^{(p)}(0)}\Big).
\ee
The convergence of this series can be checked by using the following asymptotic form of $F(a,b,c+\la,z)$ in the large $\la$ regime \cite{Temme}:
\be
\label{hypgeom asymp large c}
F(a,b,c+\la,z)=\frac{\Gamma(c+\la)}{\Gamma(c+\la-b)}\la^{-b}\Big[1+O(1/\la)\Big].
\ee
This gives us the following asymptotic form of $\frac{\varv^{(p)}(m^2)}{\varv^{(p)}(0)}$ in the large $p$ regime:
\be
\begin{split}
\frac{\varv^{(p)}(m^2)}{\varv^{(p)}(0)}
&=1+\frac{2 m^2 L^2 r_0^2}{(1+\eta r_0^2)^2}\frac{1}{p^2}+O(1/p^3).
\end{split}
\ee
Hence, in this regime $\ln\Big(\frac{\varv^{(p)}(m^2)}{\varv^{(p)}(0)}\Big)=O(1/p^2)$ which ensures the convergence of the series  in \eqref{series rep. DN map: curved}.

Substituting  $\varv^{(p)}(m^2)$ and $\varv^{(p)}(0)$ in \eqref{series rep. DN map: curved} by their values given in \eqref{eigenvalues of massive DN map: curved disks} and \eqref{eigenvalues of massless DN map: curved disks} respectively, we get
 \be
 \label{ratio of massive and massless det DN map: curved case}
\ln\Big(\frac{{\det}(\mathscr{O}_{m^2})}{{\det}'(\mathscr{O}_0)}\Big)=\ln\Big(\frac{1+ \eta r_0^2 }{2r_0 L}\Big)+\ln\Big(\frac{g_0^{(\eta)}(-m^2 L^2)}{f_0^{(\eta)}(1;-m^2 L^2)}\Big)+2\sum_{p=1}^\infty\ln\Big(\frac{g_p^{(\eta)}(-m^2 L^2)}{p f_p^{(\eta)}(1;-m^2 L^2)}\Big).
\ee
Now, from \eqref{zero mass det DN map}, we have that 
 \be
 \label{massless det DN map:curved case}
\ln{\det}'(\mathscr{O}_0)=\ln\ell_\eta=\ln \Big(\frac{4\pi r_0 L}{1+\eta r_0^2 }\Big).
\ee
As in the flat case, we can check this explicitly by defining the $\zeta$-function,
\be
{\zeta^{(\text{DN})}}'(0;s)=\sum_{p\in\mathbb{Z},p\neq 0} (\varv^{(p)}(0))^{-s}=2 \Big(\frac{1+ \eta r_0^2 }{2r_0 L}\Big)^{-s}\zeta_{\text R}(s).
\ee
This yields the expression of $\ln{\det}'(\mathscr{O}_0)$ in \eqref{massless det DN map:curved case} as shown below:
\be
\begin{split}
\ln {\det}'(\mathscr{O}_0)
&=-\partial_s {\zeta^{(\text{DN})}}'(0;s)|_{s=0}=-2 \partial_s\Big[\Big(\frac{1+ \eta r_0^2 }{2r_0 L}\Big)^{-s}\zeta_{\text R}(s)\Big]\Big|_{s=0}\\
&=-2\Big[\ln\Big(\frac{2r_0 L}{1+ \eta r_0^2 }\Big)\zeta_{\text{R}}(0)+\zeta_{\text R}'(0)\Big]=-2\Big[-\frac{1}{2}\ln\Big(\frac{2r_0 L}{1+ \eta r_0^2 }\Big)-\frac{1}{2}\ln(2\pi)\Big]\\
&=\ln \Big(\frac{4\pi r_0 L}{1+\eta r_0^2 }\Big).
\end{split}
\ee

Combining the expressions in \eqref{ratio of massive and massless det DN map: curved case} and \eqref{massless det DN map:curved case}, we get
 \be
 \label{det DN map final result}
\ln {\det}(\mathscr{O}_{m^2})=\ln(2\pi)+\ln\Big(\frac{g_0^{(\eta)}(-m^2 L^2)}{f_0^{(\eta)}(1;-m^2 L^2)}\Big)+2\sum_{p=1}^\infty\ln\Big(\frac{g_p^{(\eta)}(-m^2 L^2)}{p f_p^{(\eta)}(1;-m^2 L^2)}\Big).
\ee

\section{Neumann determinants for general values of the mass}
\label{sec: Determinant for general masses}
Now that we have computed the determinant of the Dirichlet-to-Neumann map $\mathscr{O}_{m^2}$, the only things that we need to finish the calculation of the Neumann determinants ${\det}_{\text N}(\Delta_\eta+m^2)$ are the corresponding Dirichlet determinants ${\det}_{\text D}(\Delta_\eta+m^2)$ . These  Dirichlet determinants were computed in \cite{ChauFerdet}. Here we just quote the results.

In the flat case ($\eta=0$), the logarithm of the  Dirichlet determinant of ($\Delta_0+m^2$) is given by the following infinite series:
\be
\begin{split}
\ln {\det}_{\text D} (\Delta_0+m^2)
&=\Big(-\frac{1}{3}+\frac{A_0m^2}{2\pi}\Big)\ln(\frac{\ell_0}{2\pi})+\frac{1}{3}\ln 2-\frac{1}{2}\ln(2\pi)-\frac{5}{12}-2\zeta_{\text R}'(-1)\\
&\quad+\frac{1}{2}(\gamma-1-\ln 2)(\frac{m\ell_0}{2\pi})^2+\ln \Big(I_0(\frac{m\ell_0}{2\pi})e^{-\frac{1}{4}(\frac{m\ell_0}{2\pi})^2}\Big)\\
&\quad+2\sum_{n=1}^\infty\ln\Big[\frac{2^n n!}{(\frac{m\ell_0}{2\pi})^n}I_n(\frac{m\ell_0}{2\pi})e^{-\frac{(\frac{m\ell_0}{2\pi})^2}{4(n+1)}}\Big],
\end{split}
\ee
where $\gamma$ is the Euler-Mascheroni constant. Similarly, in the curved cases $(\eta=\pm 1)$, the logarithm of the  Dirichlet determinant of ($\Delta_\eta+m^2$) is given by
\be
\begin{split}
\ln {\det}_{\text D} (\Delta_\eta+m^2)
&=\Big(-\frac{1}{3}+\frac{A_\eta m^2}{2\pi}\Big)\ln(L)-\frac{1}{2}\ln(2\pi)-\frac{5}{12}-2\zeta_{\text R}'(-1)\\
&\quad+\frac{\eta A_\eta}{3\pi L^2}-\frac{1}{3}\ln r_0+\frac{A_\eta m^2}{2\pi}(\gamma-1+\ln r_0)\\
&\quad+\ln \Big[f_0^{(\eta)}(1;-m^2L^2)e^{-\frac{A_\eta m^2}{4\pi}F(1,1,2,\frac{\eta A_\eta}{4\pi L^2})}\Big]\\
&\quad+2\sum_{n=1}^\infty\ln \Big[f_n^{(\eta)}(1;-m^2L^2)e^{-\frac{A_\eta m^2}{4\pi(n+1)}F(1,1,n+2,\frac{\eta A_\eta}{4\pi L^2})}\Big],
\end{split}
\ee
where $f_n^{(\eta)}$ is the hypergeometric function defined in \eqref{f_n def.}.

Now, from the relation \eqref{DirNeurel:two dim} and the values of $b_0$ and ${\det}(\mathscr{O}_{m^2})$ obtained in section \ref{sec: rel between Neumann and Dirichlet dets} and section \ref{sec: det DN map} respectively, we can immediately get the desired Neumann determinants. In the flat case ($\eta=0$), the logarithm of the  Neumann determinant of ($\Delta_0+m^2$) is given by
\be
\begin{split}
\ln {\det}_{\text N} (\Delta_0+m^2)
&=\Big(-\frac{1}{3}+\frac{A_0m^2}{2\pi}\Big)\ln(\frac{\ell_0}{2\pi})+\frac{1}{3}\ln 2+\frac{1}{2}\ln(2\pi)+\frac{7}{12}-2\zeta_{\text R}'(-1)\\
&\quad+\frac{1}{2}(\gamma-1-\ln 2)(\frac{m\ell_0}{2\pi})^2+\ln \Big(\frac{m\ell_0}{2\pi} I_{-1}(\frac{m\ell_0}{2\pi}) e^{-\frac{1}{4}(\frac{m\ell_0}{2\pi})^2}\Big)\\
&\quad+2\sum_{n=1}^\infty\ln\Big[\frac{2^n (n-1)!}{(\frac{m\ell_0}{2\pi})^n}\Big( \frac{ m\ell_0}{2\pi} I_{n-1}(\frac{m\ell_0}{2\pi})-n I_{n}(\frac{ m\ell_0}{2\pi})\Big)e^{-\frac{(\frac{m\ell_0}{2\pi})^2}{4(n+1)}}\Big].
\end{split}
\ee
In the curved cases $(\eta=\pm 1)$, the logarithm of the  Neumann determinant of ($\Delta_\eta+m^2$) is given by
\be
\label{Neumann determinants:curved disk}
\begin{split}
\ln {\det}_{\text N} (\Delta_\eta+m^2)
&=\Big(-\frac{1}{3}+\frac{A_\eta m^2}{2\pi}\Big)\ln(L)+\frac{1}{2}\ln(2\pi)+\frac{7}{12}-2\zeta_{\text R}'(-1)\\
&\quad-\frac{\eta A_\eta}{6\pi L^2}-\frac{1}{3}\ln r_0+\frac{A_\eta m^2}{2\pi}(\gamma-1+\ln r_0)\\
&\quad+\ln \Big[g_0^{(\eta)}(-m^2L^2)e^{-\frac{A_\eta m^2}{4\pi}F(1,1,2,\frac{\eta A_\eta}{4\pi L^2})}\Big]\\
&\quad+2\sum_{n=1}^\infty\ln \Big[\frac{g_n^{(\eta)}(-m^2L^2)}{n}e^{-\frac{A_\eta m^2}{4\pi(n+1)}F(1,1,n+2,\frac{\eta A_\eta}{4\pi L^2})}\Big],
\end{split}
\ee
where $g_n^{(\eta)}$ is the function defined in \eqref{g_n def.}. Let us note here that the expression of the determinant in \eqref{Neumann determinants:curved disk} matches exactly with the result obtained in appendix \ref{app: alternative approach to Neumann det} via an alternative approach  (see equation  \eqref{Neumann det gen masses via alt approach}). Let us also note that in appendix \ref{appendix: hemisphere} we provide a particular example where the above series can be evaluated exactly. This example corresponds to a positive curvature disk which is a hemisphere. In appendix \ref{appendix: hemisphere} we show that the determinant in this case has a very simple expression in terms of the Barnes $G$-function.

\section{The special case of $m^2=-\frac{\eta}{L^2} q(q+1)$ with $q\in \mathbb{N}$}

\label{sec: Neumann det special masses}

In \eqref{Neumann determinants:curved disk} we obtained infinite series representations of the logarithms of the Neumann determinants on the curved disks corresponding to $\eta=\pm 1$. We will now show that these quantities reduce to some exact expressions for the special cases when $m^2=-\frac{\eta}{L^2} q(q+1)$ with $q\in \mathbb{N}$. Note that for the positive curvature  disk, i.e. for $\eta=1$, these values of $m^2$ are negative. Although our infinite series representations of the logarithms of the Neumann determinants were derived for positive $m^2$, the expressions of the determinants obtained by taking the exponential of these series can be analytically continued to the domain where $m^2\leq0$. These analytically continued expressions would vanish at $m^2=0$ because of the contribution of the zero eigenvalue of the Laplacian, and they would be negative for sufficiently small negative values of $m^2$.  It is these analytically continued expressions that we would be considering in the positive curvature case.

Having given the above clarification, let us now go on to derive the exact expressions of the Neumann determinants for the  special values of the masses mentioned above. To derive these expressions, it is convenient to define
\be
\label{z_eta def.}
z_\eta=\frac{\eta r_0^2}{1+\eta r_0^2}=\frac{\eta A_\eta}{4\pi L^2},
\ee
and 
\begin{equation}
\begin{split}
&\mathcal{W}_n(q;z_\eta)
=\ln\Bigg[\frac{g_n^{(\eta)}(\eta q(q+1))}{n}e^{\frac{q(q+1)}{n+1}z_\eta  F(1,1,n+2,z_\eta)}\Bigg]\\
&= \frac{q(q+1) }{n+1}z_\eta F(1,1,n+2,z_\eta)\\
&\quad+\ln\Big[  F(1+q,-q,n+1,z_\eta)-2z_\eta(1-z_\eta)\frac{q(q+1)}{n(n+1)}\ F(2+q,1-q,n+2, z_\eta)\Big],
\end{split}
\end{equation}
for $n\geq 1$.
Note that $\sum_{n=1}^\infty \mathcal{W}_n(q;z_\eta)$ is the infinite series that enters in the expression of the Neumann determinant given in \eqref{Neumann determinants:curved disk} when $m^2=-\frac{\eta}{L^2} q(q+1)$. We will first evaluate this series and then substitute it by its value in the expression of the Neumann determinant. Our arguments will  closely follow those given in \cite{ChauFerdet} for evaluating a similar series appearing in the Dirichlet determinant.

\subsection{Evaluation of the series $\sum_{n=1}^\infty \mathcal{W}_n(q;z_\eta)$}

The infinite sum over the quantities $\mathcal{W}_n(q;z_\eta)$  can be expressed as a limit of partial sums as follows:
\begin{equation}
\begin{split}
\sum_{n=1}^\infty\mathcal{W}_n(q;z_\eta)=\lim_{N\rightarrow\infty}\Big[q(q+1)\sum_{n=1}^N  \frac{ z_\eta}{n+1}F(1,1,n+2,z_\eta)+\Sigma_N(q,z_\eta)\Big]
\end{split}
\label{sum over Wn}
\end{equation}
where
\begin{equation}
\begin{split}
&\Sigma_N(q;z_\eta)=\sum_{n=1}^N \ln\Bigg[F(1+q,-q,n+1,z_\eta)\\
&\qquad\qquad\qquad\qquad-2z_\eta(1-z_\eta)\frac{q(q+1)}{n(n+1)}\ F(2+q,1-q,n+2, z_\eta)\Bigg].
\end{split}
\label{defn of SigmaNs}
\end{equation}
The infinite sum $\sum_{n=1}^\infty \mathcal{W}_n(q;z_\eta)$ can be evaluated by first considering the asymptotic form of these partial sums in the large $N$ regime, and then taking the limit $N\rightarrow\infty$. The asymptotic form of the first term of the partial sum in \eqref{sum over Wn} at large $N$ was obtained in \cite{ChauFerdet} (see Lemma 9.1 in \cite{ChauFerdet}), and it is given by
\begin{equation}
\begin{split}
&\sum_{n=1}^N  \frac{ z_\eta}{n+1}F(1,1,n+2,z_\eta)=z_\eta \ln N+\gamma z_\eta+(1-z_\eta)\ln(1-z_\eta)+O(1/N).
\end{split}
\label{asymp. form of first term in sum over Wn}
\end{equation}
For the asymptotic form of the second term of the partial sum in \eqref{sum over Wn}, we prove the following lemma.\footnote{This proof is similar to the proof of Lemma 9.2 in \cite{ChauFerdet}. However, the polynomial whose zeros $\Big\{ \omega_{k}^{(q+1)}(z_\eta): k\in\{1,\cdots,q+1\}\Big\}$ appear in \eqref{asymp form of SigmaN} is distinct from the polynomial whose zeros appeared in Lemma 9.2 of  \cite{ChauFerdet}.}
\begin{lemma} In the large N regime,
\begin{equation}
\label{asymp form of SigmaN}
\begin{split}
&\Sigma_N(q,z_\eta)=-q(q+1)z_\eta\ln N-\sum_{k=1}^{q+1}\ln\Bigg[\frac{\Gamma\Big(1+q- \omega_{k}^{(q+1)}(z_\eta)\Big)}{\Gamma(k)}\Bigg]+O(1/N),
\end{split}
\end{equation}
where $\Big\{ \omega_{k}^{(q+1)}(z_\eta): k\in\{1,\cdots,q+1\}\Big\}$ is the set of zeros of a $(q+1)$-degree polynomial $\mathcal{Q}_{q+1}(x;z_\eta)$ in the variable $x$ that is  defined in \eqref{definition of Q}.
\end{lemma}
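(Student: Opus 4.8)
The plan is to follow the method used to prove Lemma~9.2 of \cite{ChauFerdet}: exhibit the summand of $\Sigma_N(q,z_\eta)$ in \eqref{defn of SigmaNs} as the logarithm of a ratio of a polynomial in $n$ to a product of $q+1$ consecutive shifts of $n$, sum it in closed form with the Euler $\Gamma$-function, and then read off the large-$N$ behaviour from Stirling's formula. First I would note that, since the middle Gauss parameters $-q$ and $1-q$ are non-positive integers, both $F(1+q,-q,n+1,z_\eta)$ and $F(2+q,1-q,n+2,z_\eta)$ are terminating series, hence for fixed $z_\eta$ rational functions of $n$: the first has denominator $(n+1)(n+2)\cdots(n+q)$, while $\frac{1}{n(n+1)}F(2+q,1-q,n+2,z_\eta)$ has denominator $n(n+1)(n+2)\cdots(n+q)=\Gamma(n+q+1)/\Gamma(n)$. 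Combining the two terms of the summand over the common denominator $n(n+1)\cdots(n+q)$ yields a polynomial of degree exactly $q+1$ in $n$, whose leading coefficient is fixed by the fact that the summand tends to $1$ as $n\to\infty$. Writing this numerator, in the variable $n+q$, as the polynomial $\mathcal{Q}_{q+1}(x;z_\eta)$ of \eqref{definition of Q}, with zeros $\omega_k^{(q+1)}(z_\eta)$, the argument of the logarithm in \eqref{defn of SigmaNs} is
\[
\frac{\mathcal{Q}_{q+1}(n+q;z_\eta)}{n(n+1)\cdots(n+q)}=\frac{\prod_{k=1}^{q+1}\bigl(n+q-\omega_k^{(q+1)}(z_\eta)\bigr)}{n(n+1)\cdots(n+q)}.
\]

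Next I would sum the logarithm over $n=1,\dots,N$, handling numerator and denominator separately. For the numerator one uses $\prod_{n=1}^N\bigl(n+q-\omega_k^{(q+1)}(z_\eta)\bigr)=\Gamma\bigl(N+q+1-\omega_k^{(q+1)}(z_\eta)\bigr)/\Gamma\bigl(q+1-\omega_k^{(q+1)}(z_\eta)\bigr)$; for the denominator one uses the telescoping identity $\sum_{n=1}^N\bigl[\ln\Gamma(n+q+1)-\ln\Gamma(n)\bigr]=\sum_{j=1}^{q+1}\ln\Gamma(N+j)-\sum_{k=1}^{q+1}\ln\Gamma(k)$, valid for $N$ large. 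Together these give
\[
\Sigma_N(q,z_\eta)=\sum_{k=1}^{q+1}\ln\Gamma\bigl(N+q+1-\omega_k^{(q+1)}(z_\eta)\bigr)-\sum_{j=1}^{q+1}\ln\Gamma(N+j)-\sum_{k=1}^{q+1}\ln\frac{\Gamma\bigl(1+q-\omega_k^{(q+1)}(z_\eta)\bigr)}{\Gamma(k)},
\]
whose last sum is already the $N$-independent constant of the claimed formula. It then remains to show that the first two sums contribute $-q(q+1)z_\eta\ln N+O(1/N)$. Substituting $\ln\Gamma(N+a)=(N-\tfrac12)\ln N-N+\tfrac12\ln(2\pi)+a\ln N+O(1/N)$ and using that each of these two sums has $q+1$ terms, the common piece $(N-\tfrac12)\ln N-N+\tfrac12\ln(2\pi)$ cancels between them, leaving $\bigl[\sum_{k=1}^{q+1}\bigl(q+1-\omega_k^{(q+1)}(z_\eta)\bigr)-\sum_{j=1}^{q+1}j\bigr]\ln N+O(1/N)$. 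The sum of zeros $\sum_{k=1}^{q+1}\omega_k^{(q+1)}(z_\eta)$ is, by Vieta, minus the ratio of the two leading coefficients of $\mathcal{Q}_{q+1}$; only the $j=0$ and $j=1$ terms of $F(1+q,-q,n+1,z_\eta)$ govern the top two coefficients of the numerator, and a short computation gives $\sum_{k=1}^{q+1}\omega_k^{(q+1)}(z_\eta)=q(q+1)\bigl(z_\eta+\tfrac12\bigr)$, so the bracket collapses to $-q(q+1)z_\eta$ and the lemma follows.

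The step that needs genuine care is the first: one must verify that combining the two terms really produces a polynomial of degree exactly $q+1$ — equivalently, that the apparent poles of the summand's argument at $n=0,-1,\dots,-q$ all cancel. This holds because each terminating hypergeometric term is regular away from the zeros of its own denominator Pochhammer, while the prefactor $\frac{1}{n(n+1)}$ adds poles only at $n=0,-1$, both already zeros of $n(n+1)\cdots(n+q)$; one then also has to extract the two leading coefficients of $\mathcal{Q}_{q+1}$ for the Vieta step. The telescoping of the $\Gamma$-sum and the Stirling expansion are routine, exactly as in the proof of Lemma~9.2 of \cite{ChauFerdet}; the point flagged in the footnote is simply that the extra term $-2z_\eta(1-z_\eta)\frac{q(q+1)}{n(n+1)}F(2+q,1-q,n+2,z_\eta)$ in the argument of the logarithm raises the degree of the relevant polynomial from $q$ to $q+1$ and modifies its coefficients.
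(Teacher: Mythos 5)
Your proposal is correct and follows essentially the same route as the paper: identify the argument of the logarithm with $\frac{\Gamma(n)}{\Gamma(n+q+1)}\mathcal{Q}_{q+1}(n+q;z_\eta)$, factor the monic degree-$(q+1)$ polynomial over its roots, telescope the product into ratios of $\Gamma$-functions, apply Stirling, and extract the $\ln N$ coefficient from the sum of roots via Vieta (your value $\sum_k\omega_k^{(q+1)}(z_\eta)=q(q+1)(z_\eta+\tfrac12)$ agrees with the paper's coefficient extraction from \eqref{polynomial form of Q}). The only cosmetic difference is that you motivate $\mathcal{Q}_{q+1}$ by clearing the Pochhammer denominators of the terminating hypergeometric series, whereas the paper starts from the explicit definition \eqref{definition of Q} and verifies polynomiality afterwards.
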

\begin{proof}
To prove the above lemma, let us first define the following function:
\begin{equation}
\begin{split}
\mathcal{Q}_{q+1}(x;z_\eta)
=&\frac{\Gamma(x+1)}{\Gamma(x-q+2)}\Big[(x-q)(x-q+1)F(1+q,-q,x-q+1,z_\eta)\\
&\qquad\qquad\qquad-2z_\eta(1-z_\eta)q(q+1)\ F(2+q,1-q,x-q+2, z_\eta)\Big].
\end{split}
\label{definition of Q}
\end{equation}
By using the series representation of hypergeometric functions, one can verify that $F(1+q,-q,x-q+1,z_\eta)$ and $F(2+q,1-q,x-q+2, z_\eta)$ are polynomials of degree $q$ and $q-1$ respectively in the variable $z_\eta$. Consequently, $\mathcal{Q}_{q+1}(x;z_\eta)$ is a polynomial of degree $q+1$ in $z_\eta$ which has the following form:
\begin{equation}
\begin{split}
\mathcal{Q}_{q+1}(x;z_\eta)
=&\sum_{k=0}^{q} \frac{\Gamma(1+2q-k) \Gamma(-k)\Gamma(x+1)}{\Gamma(1+q)\Gamma(-q)\Gamma(x+1-k)}\frac{z_\eta^{q-k}}{(q-k)!}\Big(x-q+2(1-z_\eta)(q-k)\Big).
\end{split}
\label{polynomial form of Q}
\end{equation}
From the above form of $\mathcal{Q}_{q+1}(x;z_\eta)$, one can note that it is a polynomial of degree $q+1$ in the variable $x$ as well because $\frac{\Gamma(x+1)}{\Gamma(x+1-k)}=\begin{cases}1\ \ \ \ \ \ \ \ \ \ \ \ \ \ \ \text{for}\ \  k=0,\\ \prod_{j=0}^{k-1}(x-j)\ \ \text{for}\ \ k\geq 1\end{cases}$. Moreover, the leading power of  $x$ in this polynomial receives contribution from only the $k=q$ term, and its coefficient is $1$. Hence, one can re-express $\mathcal{Q}_{q+1}(x;z_\eta)$ as
\begin{equation}
\begin{split}
\mathcal{Q}_{q+1}(x;z_\eta)
=&\prod_{k=1}^{q+1}\Big(x-\omega_k^{(q+1)}(z_\eta)\Big),
\end{split}
\label{expr of Q in terms of roots}
\end{equation}
where $\omega_k^{(q+1)}(z_\eta)$ are the zeros of $\mathcal{Q}_{q+1}(x;z_\eta)$ when it is treated as a polynomial in $x$.

Now, let us note that from the definition of $\mathcal{Q}_{q+1}(x;z_\eta)$ given in \eqref{definition of Q}, we can  obtain
\begin{equation}
\begin{split}
\mathcal{Q}_{q+1}(n+q;z_\eta)=\frac{\Gamma(n+q+1)}{\Gamma(n)}\Big[& F(1+q,-q,n+1,z_\eta)\\
&-2z_\eta(1-z_\eta)\frac{q(q+1)}{n(n+1)}\ F(2+q,1-q,n+2, z_\eta)\Big].
\end{split}
\end{equation}
Then from the definition of $\Sigma_N(q;z_\eta)$ given in \eqref{defn of SigmaNs}, we have
\begin{equation}
\begin{split}
\Sigma_N(q;z_\eta)
&=\sum_{n=1}^N \ln\Bigg[\frac{\Gamma(n)}{\Gamma(n+q+1)}\mathcal{Q}_{q+1}(n+q;z_\eta)\Bigg].
\end{split}
\end{equation}
Next, using the form of $\mathcal{Q}_{q+1}(x;z_\eta)$ given in \eqref{expr of Q in terms of roots}, we get
\begin{equation}
\begin{split}
\Sigma_N(q;z_\eta)
&=\ln\Bigg[\prod_{k=1}^{q+1}\prod_{n=1}^N \Big(\frac{n+q-\omega_k^{(q+1)}(z_\eta)}{n+k-1}\Big)\Bigg]\\
&=\sum_{k=1}^{q+1}\Bigg[\ln\Gamma\Big(N+1+q- \omega_k^{(q+1)}(z_\eta)\Big)-\ln\Gamma(N+k)\Bigg]\\
&\quad-\sum_{k=1}^{q+1}\ln\Bigg[\frac{\Gamma\Big(1+q- \omega_k^{(q+1)}(z_\eta)\Big)}{\Gamma(k)}\Bigg].
\end{split}
\end{equation}

The asymptotic form of $\Gamma(N+1+a)$ in the large $N$ regime is given by
\begin{equation}
\begin{split}
\ln\Gamma(N+1+a)=N\ln N-N+\Big(a+\frac{1}{2}\Big)\ln N+\frac{1}{2}\ln(2\pi)+O(1/N).
\end{split}
\end{equation}
From this we get the following asymptotic form of $\Sigma_N(q;z_\eta)$ in the large $N$ regime:
\begin{equation}
\begin{split}
\Sigma_N(q;z_\eta)
&=\Big[\frac{1}{2}q(q+1)-\sum_{k=1}^{q+1} \omega_k^{(q+1)}(z_\eta)\Big]\ln N\\
&\quad-\sum_{k=1}^{q+1}\ln\Bigg[\frac{\Gamma\Big(1+q- \omega_k^{(q+1)}(z_\eta)\Big)}{\Gamma(k)}\Bigg]+O(1/N).
\end{split}
\end{equation}

Now, let us determine the quantity $\Big(-\sum_{k=1}^{q+1} \omega_k^{(q+1)}(z_\eta)\Big)$. Note that from the form of $\mathcal{Q}_{q+1}(x;z_\eta)$ given in \eqref{expr of Q in terms of roots}, it is evident that this quantity is the coefficient of $x^{q}$ in the polynomial $\mathcal{Q}_{q+1}(x;z_\eta)$. This coefficient can be extracted from the expression of $\mathcal{Q}_{q+1}(x;z_\eta)$ given in \eqref{polynomial form of Q}, and it turns out to be\footnote{The coefficient of $x^{q}$ in $\mathcal{Q}_{q+1}(x;z_\eta)$ receives contributions from only the $k=q$ and $k=q-1$ terms in the sum given in \eqref{polynomial form of Q}.}
\begin{equation}
\begin{split}
&-\sum_{k=1}^{q+1} \omega_k^{(q+1)}(z_\eta)=-\frac{1}{2}q(q+1)(1+2z_\eta).
\end{split}
\end{equation}
This finally gives the following asymptotic form of $\Sigma_N(q;z_\eta)$,
\begin{equation}
\label{asymp. form of SigmaN}
\begin{split}
\Sigma_N(q;z_\eta)
&=-q(q+1)z_\eta\ln N-\sum_{k=1}^{q+1}\ln\Bigg[\frac{\Gamma\Big(1+q- \omega_k^{(q+1)}(z_\eta)\Big)}{\Gamma(k)}\Bigg]+O(1/N),
\end{split}
\end{equation}
thereby completing the proof of the lemma.
\end{proof}

Substituting the two terms in \eqref{sum over Wn} by their asymptotic forms given in \eqref{asymp. form of first term in sum over Wn} and \eqref{asymp. form of SigmaN}, and then taking the limit $N\rightarrow\infty$, we get
\begin{equation}
\label{infinite sum over W_n}
\begin{split}
\sum_{n=1}^\infty\mathcal{W}_n(q;z_\eta)
&=q(q+1)\Bigg[\gamma z_\eta+(1-z_\eta)\ln(1-z_\eta)\Bigg]-\sum_{k=1}^{q+1}\ln\Bigg[\frac{\Gamma\Big(1+q- \omega_k^{(q+1)}(z_\eta)\Big)}{\Gamma(k)}\Bigg].
\end{split}
\end{equation}

\subsection{Exact expression of the Neumann determinant}

The logarithm of the Neumann determinant of $(\Delta_\eta+m^2)$ for $m^2=-\frac{\eta}{L^2}q(q+1)$ is given by
\be
\begin{split}
\ln {\det}_{\text N} \Big(\Delta_\eta-\frac{\eta}{L^2} q(q+1)\Big)
&=\Big(-\frac{1}{3}-2z_\eta q(q+1)\Big)\ln(L)+\frac{1}{2}\ln(2\pi)+\frac{7}{12}-2\zeta_{\text R}'(-1)\\
&\quad-\frac{2}{3}z_\eta-\frac{1}{3}\ln r_0-2 q(q+1)(\gamma-1+\ln r_0)z_\eta\\
&\quad+\ln \Big[g_0^{(\eta)}(\eta q(q+1))e^{ q(q+1) z_\eta F(1,1,2,z_\eta)}\Big]+2\sum_{n=1}^\infty \mathcal{W}_n(q;z_\eta).
\end{split}
\ee
Now, let us note the identities $z_\eta F(1,1,2,z_\eta)=-\ln(1-z_\eta)$ and 
\be
\begin{split}
g_0^{(\eta)}(\eta q(q+1))
&=-2 q(q+1)z_\eta(1-z_\eta)F(2+q,1-q,2,z_\eta)\\
&=(q+1)\Big[P_{q+1}(1-2z_\eta)-(1-2z_\eta)P_q(1-2z_\eta)\Big],
\end{split}
\ee
where $P_q$ denotes the Legendre polynomial of degree $q$. Using these identities and substituting the series $\sum_{n=1}^\infty\mathcal{W}_n(q;z_\eta)$ by its value obtained in \eqref{infinite sum over W_n}, we finally get the desired exact expression of the Neumann determinant:
\be
\label{Neumann det for special masses}
\begin{split}
&\ln {\det}_{\text N} \Big(\Delta_\eta-\frac{\eta}{L^2} q(q+1)\Big)\\
&=\Big(-\frac{1}{3}-2z_\eta q(q+1)\Big)\ln(L)+\frac{1}{2}\ln(2\pi)+\frac{7}{12}-2\zeta_{\text R}'(-1)\\
&\quad-\frac{2}{3}z_\eta-\frac{1}{3}\ln r_0-2 q(q+1)(-1+\ln r_0)z_\eta\\
&\quad+\ln \Big[(q+1)\Big(P_{q+1}(1-2z_\eta)-(1-2z_\eta)P_q(1-2z_\eta)\Big)\Big]\\
&\quad+q(q+1)(1-2z_\eta)\ln(1-z_\eta)-2\sum_{k=1}^{q+1}\ln\Bigg[\frac{\Gamma\Big(1+q- \omega_k^{(q+1)}(z_\eta)\Big)}{\Gamma(k)}\Bigg].
\end{split}
\ee
Let us remind the reader that $z_\eta$ in the above expression is defined in \eqref{z_eta def.}, and $\omega_k^{(q+1)}(z_\eta)$ for $k\in \{1,\cdots,q+1\}$ are the roots of the polynomial $\mathcal{Q}_{q+1}(x;z_\eta)$ in $x$ that is given in \eqref{polynomial form of Q}.

\subsection{The $q=1$ case}
\label{subsec: q=1 case}
To illustrate the expression given in \eqref{Neumann det for special masses} with an example, let us consider the particular case of $q=1$. In this case the polynomial in  \eqref{polynomial form of Q} reduces to
\begin{equation}
\begin{split}
\mathcal{Q}_{2}(x;z_\eta)=&x^2-(1+2z_\eta)x-2 z_\eta(1-2z_\eta).
\end{split}
\end{equation}
The roots of this polynomial are
 \be
  \omega_1^{(2)}(z_\eta)=\frac{1+2z_\eta+\sqrt{1+12z_\eta-12z_\eta^2}}{2},\   \omega_2^{(2)}(z_\eta)=\frac{1+2z_\eta-\sqrt{1+12z_\eta-12z_\eta^2}}{2}.
 \ee
The Legendre polynomials appearing in \eqref{Neumann det for special masses} are, in this case,
 \be
P_{2}(1-2z_\eta)=\frac{1}{2}\Big[3(1-2z_\eta)^2-1\Big]=1-6z_\eta+6z_\eta^2,\ P_1(1-2z_\eta)=1-2z_\eta.
 \ee
 Using all these, from \eqref{Neumann det for special masses} we get
 \be
\label{Neumann det for q=1}
\begin{split}
{\det}_{\text N} \Big(\Delta_\eta-\frac{2\eta}{L^2}\Big)
&=-\frac{4L^{-\frac{1}{3}-4z_\eta }(2\pi)^{\frac{1}{2}}r_0^{-\frac{1}{3}}z_\eta(1-z_\eta)^{3-4z_\eta}}{\Big[\Gamma\Big(\frac{3-2z_\eta-\sqrt{1+12z_\eta(1-z_\eta)}}{2}\Big)\Gamma\Big(\frac{3-2z_\eta+\sqrt{1+12z_\eta(1-z_\eta)}}{2}\Big)\Big]^2}\\
&\qquad\exp\Big[\frac{7}{12}-2\zeta_{\text R}'(-1)+(\frac{10}{3}-4\ln r_0)z_\eta\Big].
\end{split}
\ee
Let us make a few comments on the above expression of the determinant. Notice that from definition of the quantity $z_\eta$ given in \eqref{z_eta def.} and the ranges of $r_0$ given in \eqref{conformal factors for metrics}, it follows that $z_\eta<1$ for both $\eta=1$ and $\eta=-1$. This means the factor $(1-z_\eta)^{3-4z_\eta}$ in \eqref{Neumann det for q=1} is always a positive quantity. 

For $\eta=-1$, $z_-<0$ and the $\Gamma$ functions in \eqref{Neumann det for q=1} have complex arguments when $z_-<(\frac{1}{2}-\frac{1}{\sqrt{3}})\approx -0.077$. Nonetheless, in this domain the two Gamma functions are complex conjugates of each other and their product is a positive number. For all other allowed values of $z_-$, i.e., $z_-\in[\frac{1}{2}-\frac{1}{\sqrt{3}},0)$, each of the Gamma functions is positive and hence, their product is again positive.  From these observations, it is clear that the determinant in \eqref{Neumann det for q=1} is always positive for $\eta=-1$ as should be expected since  $m^2=\frac{2}{L^2}$ is positive in this case.  

For $\eta=+1$, $z_+>0$ and  $\Gamma\Big(\frac{3-2z_++\sqrt{1+12z_+(1-z_+)}}{2}\Big)$ is always positive. However, the inverse of $\Gamma\Big(\frac{3-2z_+-\sqrt{1+12z_+(1-z_+)}}{2}\Big)$ goes from positive to negative as one increases the value of $z_+$. The transition happens at $z_+=1/2$ where this Gamma function has a pole. This point ($z_+=1/2$) corresponds to $r_0=1$ and the associated positive curvature disk is a hemisphere with its boundary being a geodesic. The determinant on this hemisphere vanishes due to the aforementioned pole of the Gamma function. For all other allowed values of $z_+$, i.e. $z_+\in (0,1)$ and $z_+\neq \frac{1}{2}$, the determinant is strictly negative due to the overall sign in the RHS of \eqref{Neumann det for q=1}.

Let us end with  a few more comments on the vanishing of the determinant on the hemisphere that we noted above. This is actually not only true for the $q=1$ case that we are considering here, but is more generally true for any $q\in\mathbb{N}$. We have proved this in appendix \ref{appendix: hemisphere}. This is not  surprising as the values of $m^2$ that we are considering are negative and the determinant will vanish if $|m^2|$ matches with an eigenvalue of the Laplacian. In appendix \ref{appendix: hemisphere} we show that this is indeed the reason behind the vanishing of the determinant for $m^2=-\frac{q(q+1)}{L^2}$ with $q\in\mathbb{N}$.

\section*{Acknowledgments}

The author would like to thank Frank Ferrari for useful discussions.

\subsection*{Declarations}

\subsubsection*{Funding}

This work is partially supported by the International Solvay Institutes and the Belgian Fonds National de la Recherche Scientifique FNRS (convention IISN 4.4503.15). The author is also supported by a postdoctoral research fellowship of FNRS.

\subsubsection*{Data Availability}

Data sharing not applicable to this article as no datasets were generated or analysed during
the current study.

\subsubsection*{Conflict of interest}

The author has no relevant financial or non-financial interests to disclose.
\appendix

\section{Positivity of  the Laplacian and the Dirichlet-to-Neumann map}
\label{appendix: positivity of operators}
In this appendix we will show that $\Delta$, for both Dirichlet and Neumann boundary conditions, is a positive operator. We will also show that the Dirichlet-to Neumann map $\mathscr{O}_{m^2}$ is  a positive operator as well when $m^2\geq 0$. These results are well-known but we include their proofs for the sake of clarity and completeness. In what follows we will consider these operators for a smooth $d$-dimensional compact manifold $\mathscr{M}$ with a connected boundary $\partial \mathscr{M}$. We will denote the metric on $\mathscr{M}$ by $g$ and the induced metric on  $\partial \mathscr{M}$ by $h$.

To show the positivity of $\Delta$, let us define the following inner product on the space of smooth functions satisfying Dirichlet or  Neumann boundary conditions
\be
\langle\psi_1|\psi_2\rangle=\int_{\mathscr{M}}d^dx \sqrt{g}\psi_1\psi_2.
\ee
Note that for either Dirichlet or Neumann boundary conditions, $\Delta$ is a self-adjoint operator with respect to this inner product as shown below:
\be
\langle\psi_1|\Delta|\psi_2\rangle=\int_{\mathscr{M}}d^dx \sqrt{g}\psi_1\Delta\psi_2=\int_{\mathscr{M}}d^dx \sqrt{g}\psi_2\Delta\psi_1+\oint_{\partial\mathscr{M}}d^{d-1}y\ \sqrt{h}(\psi_2\partial_n\psi_1-\psi_1\partial_n\psi_2).
\ee
The last term in the above equation, involving the boundary integral, vanishes both for Dirichlet and Neumann boundary conditions. So, we get
\be
\langle\psi_1|\Delta|\psi_2\rangle=\int_{\mathscr{M}}d^dx \sqrt{g}\psi_2\Delta\psi_1=\langle\psi_2|\Delta|\psi_1\rangle,
\ee
and hence, $\Delta$ is a self-adjoint operator. Now, the positivity of $\Delta$ can be shown as follows:
\be
\begin{split}
\langle \psi|\Delta|\psi\rangle
&=-\int_{\mathscr{M}}d^dx\ \psi\partial_\mu(\sqrt{g}g^{\mu\nu}\partial_\nu\psi)\\
&=-\oint_{\partial\mathscr{M}}d^{d-1}y\ \sqrt{h} \psi\partial_n\psi+\int_{\mathscr{M}}d^dx\ \sqrt{g}g^{\mu\nu}\partial_\mu\psi \partial_\nu\psi\\
&=\int_{\mathscr{M}}d^dx\ \sqrt{g}g^{\mu\nu}\partial_\mu\psi \partial_\nu\psi\geq 0.
\end{split}
\ee
Here, we have used the fact the boundary integral vanishes for both Dirichlet and Neumann boundary conditions. Moreover, the positivity of the quantity in the last line follows from the positive eigenvalues of the metric.

Similarly, in order to show the positivity of the Dirichlet-to-Neumann map, we can first define the following inner product on the space of smooth functions on the boundary:
\be
\label{inner product: boundary}
\langle f_1| f_2\rangle=\oint_{\partial \mathscr{M}}dy\ \sqrt{h} f_1 f_2.
\ee
To show that $\mathscr{O}_{m^2}$ is a self-adjoint operator with respect to this inner product, consider the following identities:
\be
\label{DN map matrix element}
\begin{split}
\langle f_1|\mathscr{O}_{m^2}| f_2\rangle
&=\oint_{\partial \mathscr{M}}d^{d-1}y\ \sqrt{h} f_1 \partial_n H_{m^2}[f_2]=\int_{ \mathscr{M}}d^dx\ \partial_\mu\Big[H_{m^2}[f_1]\sqrt{g}g^{\mu\nu} \partial_\nu H_{m^2}[f_2]\Big]\\
&=-\int_{ \mathscr{M}}d^dx\ \sqrt{g} H_{m^2}[f_1]\Delta H_{m^2}[f_2]+\int_{ \mathscr{M}}d^dx\ \sqrt{g}g^{\mu\nu} \partial_\mu H_{m^2}[f_1]\partial_\nu H_{m^2}[f_2]\\
&=m^2\int_{ \mathscr{M}}d^dx\ \sqrt{g} H_{m^2}[f_1] H_{m^2}[f_2]+\int_{ \mathscr{M}}d^dx\ \sqrt{g}g^{\mu\nu} \partial_\mu H_{m^2}[f_1]\partial_\nu H_{m^2}[f_2].
\end{split}
\ee
Here we have used the fact that $\Delta H_{m^2}[f_2]=-m^2H_{m^2}[f_2]$ and $H_{m^2}[f_1]|_{\partial \mathscr{M}}=f_1$. Note that the last line of \eqref{DN map matrix element} is symmetric under the exchange of $f_1$ and $f_2$. Therefore, we have
\be
\langle f_1|\mathscr{O}_{m^2}| f_2\rangle=\langle f_2|\mathscr{O}_{m^2}| f_1\rangle,
\ee
and hence, $\mathscr{O}_{m^2}$ is a self-adjoint operator. The positivity of the operator $\mathscr{O}_{m^2}$ for $m^2\geq 0$  also immediately follows from \eqref{DN map matrix element} by setting $f_1=f_2=f$ as shown below:
\be
\begin{split}
\langle f|\mathscr{O}_{m^2}| f\rangle
&=m^2\int_{ \mathscr{M}}d^dx\ \sqrt{g} \Big(H_{m^2}[f]\Big)^2+\int_{ \mathscr{M}}d^dx\ \sqrt{g}g^{\mu\nu} \partial_\mu H_{m^2}[f]\partial_\nu H_{m^2}[f]\geq 0.
\end{split}
\ee

\section{Lowest eigenvalue of the Dirichlet-to-Neumann map in the small $m^2$ regime}
\label{lowest ev of DN map}
In this appendix, we will consider the behaviour of the lowest eigenvalue of  the Dirichlet-to-Neumann map $\mathscr{O}_{m^2}$ for the constant curvature disks in the small $m^2$ regime. As $m^2\rightarrow 0$, the lowest eigenvalue of the Dirichlet-to-Neumann map $\mathscr{O}_{m^2}$ goes to zero. This is because there is a constant mode $f_0^{(0)}(y)=\frac{1}{\sqrt{\ell_\eta}}$, for all $y\in\partial\disk$, corresponding to which $H_0[f_0^{(0)}]=f_0^{(0)}=\frac{1}{\sqrt{\ell_\eta}}$ leading to the vanishing of the action of $\mathscr{O}_0$ on this mode as shown below:
\be
\mathscr{O}_0[f_0^{(0)}]=\partial_n H_0[f_0^{(0)}]=0.
\ee
We will employ perturbation theory to determine the lowest eigenvalue for small $m^2$. We will restrict our attention only to the contribution at $O(m^2)$. Suppose, the  lowest eigenvalue of $\mathscr{O}_{m^2}$ and the corresponding eigenfunction are given by $v_0^{(m^2)}$ and $f_0^{(m^2)}$ respectively. Applying perturbation theory about $m^2=0$ we get
\be
v_0^{(m^2)} f_0^{(0)}=\mathscr{O}_{0} f_0^{(m^2)}+\mathscr{O}_{m^2} f_0^{(0)}+O(m^4).
\ee
This implies that
\be
v_0^{(m^2)}=\langle f_0^{(0)}|\mathscr{O}_{0}| f_0^{(m^2)}\rangle+\langle f_0^{(0)}|\mathscr{O}_{m^2}| f_0^{(0)}\rangle+O(m^4),
\ee
where we are using the inner product defined in \eqref{inner product: boundary}. Since $\mathscr{O}_0$ is a self-adjoint operator with respect to this inner product (as argued in appendix \ref{appendix: positivity of operators}), we have
\be
\langle f_0^{(0)}|\mathscr{O}_{0}| f_0^{(m^2)}\rangle=\langle f_0^{(m^2)}|\mathscr{O}_{0}| f_0^{(0)}\rangle=0.
\ee
Therefore,
\be
\begin{split}
v_0^{(m^2)}
&=\langle f_0^{(0)}|\mathscr{O}_{m^2}| f_0^{(0)}\rangle+O(m^4)=f_0^{(0)}\oint_{\partial \disk} ds_\eta\ \partial_n H_{m^2}[f_0^{(0)}]+O(m^4)\\
&=-f_0^{(0)}\int_{ \disk} d^2 x\ e^{2\sigma_\eta} \Delta_\eta H_{m^2}[f_0^{(0)}]+O(m^4)=m^2f_0^{(0)}\int_{ \disk} d^2 x\ e^{2\sigma_\eta}  H_{0}[f_0^{(0)}]+O(m^4)\\
&=m^2\Big(f_0^{(0)}\Big)^2\int_{ \disk} d^2 x\ e^{2\sigma_\eta}  +O(m^4)=\frac{m^2 A_\eta}{\ell_\eta}  +O(m^4),
\end{split}
\ee
where we have used the fact that 
\be
\Delta_\eta H_{m^2}[f_0^{(0)}]=-m^2 H_{m^2}[f_0^{(0)}]=-m^2 H_{0}[f_0^{(0)}]+O(m^4).
\ee

\section{An alternative derivation of the Neumann determinant}
\label{app: alternative approach to Neumann det}

In this appendix we will evaluate the Neumann determinant ${\det}_{\text N}(\Delta_\eta+m^2)$ via an alternative approach  that  closely follows the analysis done in \cite{ChauFerdet} for similar determinants with Dirichlet boundary conditions.  We will restrict our attention to the determinants on the disks with nonzero constant curvature, i.e. the cases $\eta=\pm 1$, but the determinant on the flat disk ($\eta=0$) can be evaluated analogously. We will show that the values of these determinants derived in this alternative way match with those obtained in section \ref{sec: Determinant for general masses}.

 To evaluate these determinants, we consider the following equation satisfied the eigenfunctions of the Laplacian $\Delta_\eta$,
 \be
 \label{eq. egn fn Laplacian}
 \Delta_\eta\Phi=\la \Phi,
 \ee
 with the Neumann boundary condition
  \be
  \label{Neumann bdy cond. on egn fn of Laplacian}
\partial_n\Phi|_{\partial\disk}=0.
 \ee
 The solutions to equation \eqref{eq. egn fn Laplacian} can be sought by decomposing the field $\Phi$ into its Fourier modes as in \eqref{Fourier decomp.: bulk field}. The equations satisfied by these Fourier modes are 
 \be
 \label{Sturm-Liouville eqn.}
 L_j^{(\eta)}\Phi_j(r)=\la \Phi_j(r),
 \ee
where $L_j^{(\eta)}$ are Sturm-Liouville operators defined as
\be
 \label{Sturm-Liouville operator}
 L_j^{(\eta)}=-e^{-2\sigma_\eta(r)} \Big(\frac{d^2}{dr^2}+\frac{1}{r}\frac{d}{dr}-\frac{j^2}{r^2}\Big)\ \text{for}\ j\in\mathbb{Z}.
\ee
The general solution to equation \eqref{Sturm-Liouville eqn.} that is regular at $r=0$ is the function $f_j^{(\eta)}(r;\la L^2)$ defined in \eqref{f_n def.}. We can then take the  independent real functions satisfying the equation \eqref{eq. egn fn Laplacian} to be
\be
\Phi^{(p)}(\la;r,\theta)=\begin{cases} f_p^{(\eta)}(r;\la L^2)\cos\theta,\ \text{for}\ p\geq 0,\\ f_p^{(\eta)}(r;\la L^2)\sin\theta \ \text{for}\ p<0.\end{cases}
\ee
Now, for each value of $p$, imposing the Neumann boundary condition \eqref{Neumann bdy cond. on egn fn of Laplacian} leads to the following constraint on the values of $\la$:
\be
g_p^{(\eta)}(\la L^2)=0,
\ee
where $g_p^{(\eta)}(z)$ is the radial derivative of $f_p^{(\eta)}(r;z)$ at the boundary ($r=1$) and it is given in \eqref{g_n def.}.
This picks out a discrete set of eigenvalues of the Laplacian, which we denote by $\la_{p,k}^{(\eta;\text N)}$ with $k\in \mathbb{N}_0$. We arrange these eigenvalues in the ascending order, i.e., $\la_{p,0}^{(\eta;\text N)}<\la_{p,1}^{(\eta;\text N)}<\cdots$. With these eigenvalues, one can define a $\zeta$-function
\be
\zeta_p^{(\eta;\text N)}(m^2;s)=\sum_{k=0}^\infty \frac{1}{(\la_{p,k}^{(\eta;\text N)}+m^2)^s}
\ee
for each $p\in \mathbb{Z}$. The determinant of the operator $(L_p^{(\eta)}+m^2)$ is defined in terms of this $\zeta$-function as follows:
\be
\label{zeta regularised SL det}
\ln{\det}_{\text{N}}(L_p^{(\eta)}+m^2)=-\partial_s \zeta_p^{(\eta;\text N)}(m^2;s)|_{s=0}.
\ee
Note that for $p=0$, the lowest eigenvalue  ($\la_{0,0}^{(\eta,\text N)}$) is zero as it corresponds to the constant mode on the disk. So, $\ln{\det}_{\text{N}}(L_0^{(\eta)}+m^2)$, as defined in \eqref{zeta regularised SL det}, will diverge in the $m^2\rightarrow 0$ limit. To regularise this divergence in the massless limit, we can define the following $\zeta$-function  by removing the zero eigenvalue:
\be
{\zeta_0^{(\eta;\text N)}}'(0;s)=\sum_{k=1}^\infty \frac{1}{(\la_{0,k}^{(\eta;\text N)})^s}.
\ee
The $\zeta$-regularised deteminant of $L_0^{(\eta)}$ is given in terms of this $\zeta$-function as folows:
\be
\ln{\det}_{\text{N}}'(L_0^{(\eta)})=-\partial_s {\zeta_0^{(\eta;\text N)}}'(0;s)|_{s=0}.
\ee

Having introduced the determinants of $(L_p^{(\eta)}+m^2)$ both for nonzero mass and zero mass, let us next look at the ratios $\frac{{\det}_{\text{N}}(L_p^{(\eta)}+m^2)}{{\det}_{\text{N}}(L_p^{(\eta)})}$ for $p\neq 0$ and $\frac{{\det}_{\text{N}}(L_0^{(\eta)}+m^2)}{{\det}_{\text{N}}'(L_0^{(\eta)})}$. The logarithms of these ratios are given by
\be
\label{logratiodetSL}
\begin{split}
&\ln\Big[\frac{{\det}_{\text{N}}(L_p^{(\eta)}+m^2)}{{\det}_{\text{N}}(L_p^{(\eta)})}\Big]=\sum_{k\geq 0}\ln\Big(1+\frac{m^2}{\la_{p,k}^{(\eta;\text N)}}\Big)\ \ \text{for}\ \ p\neq 0,\\
&\ln\Big[\frac{{\det}_{\text{N}}(L_0^{(\eta)}+m^2)}{{\det}_{\text{N}}'(L_0^{(\eta)})}\Big]=\ln m^2+\sum_{k> 0}\ln\Big(1+\frac{m^2}{\la_{0,k}^{(\eta;\text N)}}\Big).
\end{split}
\ee
The convergence of the above series can be checked by noting that Weyl's law \cite{Weyllaw1, Weyllaw2} in one dimension gives the following asymptotic form for the eigenvalues $\la_{p,k}^{(\eta;\text N)}$ as $k\rightarrow\infty$:
\be\label{Weyllaw1d}
 \la^{(\eta;\text N)}_{p,k}\underset{k\rightarrow\infty}{\sim} \Bigl(\frac{\pi k}{a_{\eta}}\Bigr)^{2},
\ee
where
\be\label{aetadef} 
a_{\eta} = \int_{0}^{1}\!e^{-\sigma_\eta(r)}\,\d r = 
\begin{cases} 2\arctanh r_{0}  & \text{if $\eta=-1$},\\ 2\arctan r_{0} & \text{if $\eta=+1$} .
\end{cases}\ee
This leads to $\ln\Big(1+\frac{m^2}{\la_{p,k}^{(\eta;\text N)}}\Big)=O(\frac{1}{k^2})$ in the large $k$ regime, which ensures that the series in \eqref{logratiodetSL} are convergent.

Having established the expressions for the ratios of the massive and massless Sturm-Liouville determinants given in \eqref{logratiodetSL},   we will prove in the next subsection that these ratios have certain specific values (see proposition \ref{SLdetcurved}). Then  in subsection \ref{app: subsec: alt deriv. of Neumann det} we will explain how the Neumann determinant ${\det}_\text{N}(\Delta_\eta+m^2)$ can be computed from these ratios.

\subsection{Ratio of the massive and massless Sturm-Liouville determinants}
\label{app:subsec: SL determinant ratio}

\begin{proposition}\label{SLdetcurved}
For the constant curvature round disks with nonzero curvature, i.e. for $\eta=\pm 1$, the ratios of the massive and massless Sturm-Liouville determinants corresponding to Neumann boundary condition are given by
  \be
     \label{SLdetcurved:nonzerop}
  \frac{{\det}_{\text N} (L_p^{(\eta)}+m^2)}{{\det}_{\text N} (L_p^{(\eta)})}=\prod_{k\geq 0}\Big(1+\frac{m^2}{\la_{p,k}^{(\eta;\text N)}}\Big)=\frac{g_p^{(\eta)}(-m^2 L^2)}{|p|}\ \text{for}\ p\neq 0,
\ee
\be
   \label{SLdetcurved:zerop}
  \frac{{\det}_{\text N} (L_0^{(\eta)}+m^2)}{{\det}_{\text N}' (L_0^{(\eta)})}=m^2\prod_{k> 0}\Big(1+\frac{m^2}{\la_{0,k}^{(\eta;\text N)}}\Big)= \frac{1+\eta r_0^2}{2 r_0^2 L^2}g_0^{(\eta)}(-m^2 L^2).
\ee
 \end{proposition}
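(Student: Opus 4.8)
The plan is to run a Gelfand--Yaglom type argument, parallel to the one used in \cite{ChauFerdet} for the Dirichlet problem, the only change being that the Neumann eigenvalue condition selects the zeros of $\la\mapsto g_p^{(\eta)}(\la L^2)$ rather than those of $\la\mapsto f_p^{(\eta)}(1;\la L^2)$. The first equalities in \eqref{SLdetcurved:nonzerop} and \eqref{SLdetcurved:zerop} are just \eqref{logratiodetSL} exponentiated, the convergence of the products already being guaranteed by Weyl's law \eqref{Weyllaw1d}; one obtains them from $\partial_{m^2}\ln{\det}_{\text N}(L_p^{(\eta)}+m^2)=\zeta_p^{(\eta;\text N)}(m^2;1)=\sum_k(\la_{p,k}^{(\eta;\text N)}+m^2)^{-1}$ (a sum that converges in one dimension), integrated from $0$ to $m^2$, with the zero mode of $L_0^{(\eta)}$ split off in the $p=0$ case. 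So the entire content of the proposition is the pair of identities
\[\prod_{k\geq 0}\Bigl(1+\frac{m^2}{\la_{p,k}^{(\eta;\text N)}}\Bigr)=\frac{g_p^{(\eta)}(-m^2L^2)}{|p|}\ \ (p\neq 0),\qquad m^2\prod_{k> 0}\Bigl(1+\frac{m^2}{\la_{0,k}^{(\eta;\text N)}}\Bigr)=\frac{1+\eta r_0^2}{2r_0^2L^2}\,g_0^{(\eta)}(-m^2L^2),\]
which I would read as a Hadamard factorisation statement for the entire function $m^2\mapsto g_p^{(\eta)}(-m^2L^2)$.

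I would then check the inputs to Hadamard's theorem. First, \emph{entirety}: from \eqref{g_n def.}, $\la$ enters each hypergeometric factor $F$ only through $\sqrt{1+4\eta\la}$ appearing in the symmetric pair of upper parameters, so there is no branch cut, and with $z_\eta\in(0,1)$ fixed and the third parameter a positive integer each such $F$ is entire in its first two parameters; hence $m^2\mapsto g_p^{(\eta)}(-m^2L^2)$ is entire. Second, \emph{location and simplicity of zeros}: the regular solution $f_p^{(\eta)}(r;-m^2L^2)$ meets the Neumann condition iff $g_p^{(\eta)}(-m^2L^2)=0$, which is precisely the eigenvalue condition, so the zeros in $m^2$ are exactly $\{-\la_{p,k}^{(\eta;\text N)}\}$; that each is simple, i.e. $\tfrac{d}{d\la}g_p^{(\eta)}(\la L^2)\big|_{\la=\la_{p,k}^{(\eta;\text N)}}\neq 0$, follows from the standard Wronskian/Green's-identity argument together with the simplicity of the eigenvalues of the regular Sturm--Liouville operator $L_p^{(\eta)}$. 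Third, \emph{order}: $m^2\mapsto g_p^{(\eta)}(-m^2L^2)$ has order $1/2$, as one reads off the large-parameter asymptotics of the hypergeometric function $F$ (consistent with the flat limit, where this combination degenerates into $\tfrac{m\ell_0}{2\pi}I_{p-1}(\tfrac{m\ell_0}{2\pi})-p\,I_p(\tfrac{m\ell_0}{2\pi})$, of order $1/2$ in $m^2$). Since the order is below $1$ and $\sum_k 1/\la_{p,k}^{(\eta;\text N)}<\infty$, Hadamard's theorem gives, for $p\neq 0$, $g_p^{(\eta)}(-m^2L^2)=g_p^{(\eta)}(0)\prod_{k\geq0}(1+m^2/\la_{p,k}^{(\eta;\text N)})$ with no exponential prefactor; and $g_p^{(\eta)}(0)=|p|$ by \eqref{eigenvalues of massless DN map: curved disks} together with $f_p^{(\eta)}(1;0)=1$, which yields \eqref{SLdetcurved:nonzerop}.

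For $p=0$ one has $g_0^{(\eta)}(0)=0$, so the factorisation reads $g_0^{(\eta)}(-m^2L^2)=c\,m^2\prod_{k>0}(1+m^2/\la_{0,k}^{(\eta;\text N)})$; the constant is fixed by the slope at the origin, $c=-L^2\,\tfrac{d}{d\la}g_0^{(\eta)}(\la)\big|_{\la=0}$. Setting $|p|=0$ in \eqref{g_n def.}, the only term surviving at $\la=0$ is $-\tfrac{2r_0^2}{(1+\eta r_0^2)^2}F(2,1,2,z_\eta)$ with $F(2,1,2,z_\eta)=(1-z_\eta)^{-1}=1+\eta r_0^2$, so $\tfrac{d}{d\la}g_0^{(\eta)}(\la)\big|_{\la=0}=-\tfrac{2r_0^2}{1+\eta r_0^2}$, hence $c=\tfrac{2r_0^2L^2}{1+\eta r_0^2}$ and dividing through gives \eqref{SLdetcurved:zerop}. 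The main obstacle in this program is the order estimate: one must genuinely establish that $g_p^{(\eta)}(-m^2L^2)$ is entire of order exactly $1/2$ with a genus-zero canonical product and no $e^{\alpha m^2}$ factor, which requires uniform control of the large-parameter behaviour of the two hypergeometric factors in \eqref{g_n def.} along every ray of the $m^2$-plane --- or, equivalently, an a priori two-sided bound matching the $\sim \sinh(a_\eta\sqrt{m^2})/(a_\eta\sqrt{m^2})$ growth that Weyl's law \eqref{Weyllaw1d} imposes on $\prod_k(1+m^2/\la_{p,k}^{(\eta;\text N)})$. Everything else is routine bookkeeping of the kind carried out in \cite{ChauFerdet}.
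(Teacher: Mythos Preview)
Your proposal is correct and follows the same Gelfand--Yaglom philosophy as the paper, with the same analytic input (large-parameter asymptotics of the hypergeometric function, cited from \cite{Jones}) and the same normalisation computations at $z=0$. The only difference is in packaging: you invoke Hadamard's factorisation theorem, which requires a global order-$\tfrac12$ estimate for $z\mapsto g_p^{(\eta)}(z)$ on all of $\mathbb{C}$, whereas the paper proceeds by defining $e^{h_p^{(\eta)}(z)}=g_p^{(\eta)}(z)/\prod_k(1-z/(\la_{p,k}^{(\eta;\text N)}L^2))$ and showing $\partial_z h_p^{(\eta)}=0$ via a contour integral over circles of radius $R_n=\tfrac{L^2}{2}(\la_{p,n-1}^{(\eta;\text N)}+\la_{p,n}^{(\eta;\text N)})$, which only needs the decay $\bigl|\tfrac{\partial_{z'}g_p^{(\eta)}(z')}{(z'-z)\,g_p^{(\eta)}(z')}\bigr|=O(|z'|^{-3/2})$ along that particular sequence of contours. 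The paper's version is thus marginally less demanding on the asymptotics (no uniform bound in the sectors near the zeros is required), while your Hadamard formulation is cleaner to state; in practice both routes reduce to the same estimate from \cite{Jones}.
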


 \begin{proof}

\mbox{}
 
 \underline{$\bf{p\neq 0}$:}
 
 To prove \eqref{SLdetcurved:nonzerop} for $p\neq 0$, let us first note that the function $g_p^{(\eta)}(z)$ given in \eqref{g_n def.} is an entire function in the complex plane. Now, we can define the following function $h_p^{(\eta)}(z)$ which is also entire:
\be
  e^{h_p^{(\eta)}(z)}=\frac{g_p^{(\eta)}(z)}{\prod_{k\geq 0}\Big(1-z /(\la_{p,k}^{(\eta;\text N)}L^2)\Big)}.
\ee
The derivative of the function $h_p^{(\eta)}(z)$ is given by
\be
  \partial_z h_p^{(\eta)}(z)=\frac{\partial_z g_p^{(\eta)}(z)}{g_p^{(\eta)}(z)}-\sum_{k\geq 0}\frac{1}{z-\la_{p,k}^{(\eta;\text N)}L^2}.
\ee
We will now show that this derivative vanishes everywhere in the complex plane. For this, consider a circular contour $\mathcal{C}_n$ which is centred at the origin of the complex plane and whose radius is $R_n=\frac{L^2}{2}\Big(\la_{p,n-1}^{(\eta;\text N)}+\la_{p,n}^{(\eta;\text N)}\Big)$. Since the eigenvalues $\la_{p,n}^{(\eta;\text N)}$ go to infinity as $n\rightarrow\infty$, the radius $R_n$ also goes to infinity in the same limit. For any point $z\in\mathbb{C}$, we can take $n$ to be sufficiently large so that the circle encloses this point, and then use  Cauchy's residue theorem to get
\be
\frac{1}{2\pi i}\oint_{\mathcal{C}_n}\frac{dz'}{z'-z}\frac{\partial_{z'} g_p^{(\eta)}(z')}{g_p^{(\eta)}(z')}=\frac{\partial_z g_p^{(\eta)}(z)}{g_p^{(\eta)}(z)}-\sum_{k= 0}^{n-1}\frac{1}{z-\la_{p,k}^{(\eta;\text N)}L^2}.
\ee
Therefore,
\be
 \partial_z h_p^{(\eta)}(z)=\lim_{n\rightarrow\infty}\frac{1}{2\pi i}\oint_{\mathcal{C}_n}\frac{dz'}{z'-z}\frac{\partial_{z'} g_p^{(\eta)}(z')}{g_p^{(\eta)}(z')}.
\ee
From the asymptotic behaviour of the hypergeometric functions derived in \cite{Jones}, one can check that as $n\rightarrow\infty$, $\Big|\frac{1}{z'-z}\frac{\partial_{z'} g_p^{(\eta)}(z')}{g_p^{(\eta)}(z')}\Big|\sim O(|z'|^{-\frac{3}{2}})$ along $\mathcal{C}_N$. This means that the above contour integral vanishes as $n\rightarrow\infty$, and hence we find that $h_p^{(\eta)}$ is a constant function in the entire complex plane, i.e.
\be
 \partial_z h_p^{(\eta)}(z)=0\ \text{for all}\ z\in\mathbb{C}.
\ee
Now, as $z\rightarrow0$, we have
\be
\label{zero limit of g_p}
     e^{h_p^{(\eta)}(0)}=\lim_{z\rightarrow 0}g_p^{(\eta)}(z)=|p|.
\ee
Then by using the fact that $h_p^{(\eta)}(z)$ is a constant function in the complex plane, we get
\be
\label{g_p expr on complex plane}
   g_p^{(\eta)}(z).=|p|\prod_{k\geq 0}\Big(1-z /(\la_{p,k}^{(\eta;\text N)}L^2)\Big),
\ee
for any $z\in\mathbb{C}$. Setting $z=-m^2L^2$ ,we have
\be
  g_p^{(\eta)}(-m^2L^2)=|p|\prod_{k\geq 0}\Big(1+\frac{m^2}{\la_{p,k}^{(\eta;\text N)}}\Big).
\ee
This immediately implies that
\be
\frac{{\det}_{\text N} (L_p^{(\eta)}+m^2)}{{\det}_{\text N} (L_p^{(\eta)})}= \prod_{k\geq 0}\Big(1+\frac{m^2}{\la_{p,k}^{(\eta;\text N)}}\Big)= \frac{g_p^{(\eta)}(-m^2 L^2)}{|p|}.
\ee

\mbox{}

\underline{$\bf{p=0}$:}

 To prove \eqref{SLdetcurved:zerop}, let us introduce the entire function $h_0^{(\eta)}(z)$ as follows:
\be
  e^{h_0^{(\eta)}(z)}=\frac{g_0^{(\eta)}(z)}{z\prod_{k> 0}\Big(1-z/(\la_{0,k}^{(\eta;\text N)} L^2)\Big)}.
\ee
The derivative of this function is given by    
\be
 \partial_z h_0^{(\eta)}(z)=\frac{\partial_z g_0^{(\eta)}(z)}{g_0^{(\eta)}(z)}-\frac{1}{z}-\sum_{k> 0}\frac{1}{z-\la_{0,k}^{(\eta;\text N)}L^2}=\frac{\partial_z g_0^{(z)}(z)}{g_0^{(\eta)}(z)}-\sum_{k\geq 0}\frac{1}{z-\la_{0,k}^{(\eta;\text N)}L^2},
\ee
with the second equality following from the fact that $\la_{0,0}^{(\eta;\text N)}=0$. Just as in $p\neq 0$ case, this derivative vanishes everywhere in the complex plane, and hence $h_0^{(\eta)}(z)$ is a constant function in the complex plane. Evaluating the value of this constant function at $z=0$, we get
    \be
  e^{h_0^{(\eta)}(0)}=\lim_{z\rightarrow 0}\frac{g_0^{(\eta)}(z)}{z}=-\frac{2  r_0^2}{(1+\eta r_0^2)^2}F\Big(2,1,2,\frac{\eta r_0^2}{1+\eta r_0^2 }\Big)=-\frac{2  r_0^2}{1+\eta r_0^2}.
\ee
Substituting $e^{h_0^{(\eta)}(z)}$ by this constant value, we have
\be
\label{g_0 expr on complex plane}
 g_0^{(\eta)}(z)=-\frac{2  r_0^2}{1+\eta r_0^2}z\prod_{k> 0}\Big(1-z/(\la_{0,k}^{(\eta;\text N)}L^2)\Big).
\ee
Finally, setting $z=-m^2L^2$, we get
\be
  \frac{{\det}_{\text N} (L_0^{(\eta)}+m^2)}{{\det}_{\text N}' (L_0^{(\eta)})}=m^2\prod_{k> 0}\Big(1+\frac{m^2}{\la_{0,k}^{(\eta;\text N)}}\Big)=\frac{1+\eta r_0^2}{2  r_0^2 L^2}g_0^{(\eta)}(-m^2L^2).
\ee
This completes the proof of Proposition \ref{SLdetcurved}.

\end{proof}

\subsection{Evaluation of the Neumann determinant ${\det}_\text{N}(\Delta_\eta+m^2)$}
\label{app: subsec: alt deriv. of Neumann det}

To relate the Neumann determinant ${\det}_{\text N}(\Delta_\eta+m^2)$ to the Sturm-Liouville determinants, let us first rearrange the all eigenvalues of the Laplacian (i.e. all the $\la_{p,k}^{(\eta;\text N)}$'s) in ascending order as described in section \ref{sec: rel between Neumann and Dirichlet dets}, and label them as $\la_n^{(\eta;\text N)}$ with $n\in\mathbb{N}_0$. Then we can introduce the following Neumann $\zeta$-function:
\be
\zeta^{(\eta,\text N)}(m^2;s)=\sum_{n=0}^\infty \frac{1}{(\la_n^{(\eta;\text N)}+m^2)^s}=\sum_{p\in\mathbb{Z}}\zeta_p^{(\eta,\text N)}(m^2;s).
\ee
The $\zeta$-regularised Neumann determinant of $(\Delta_\eta+m^2)$ is defined by
\be
\ln{\det}_{\text{N}}(\Delta_\eta+m^2)=-\partial_s\zeta^{(\eta,\text N)}(m^2;s)|_{s=0}=-\partial_s\Big[\sum_{p\in\mathbb{Z}}\zeta_p^{(\eta,\text N)}(m^2;s)\Big]\Big|_{s=0}.
\ee
As earlier, in the massless case, one has to remove the zero eigenvalue of $\Delta_\eta$ while defining the zeta function and the determinant. Accordingly, we define
\be
{\zeta^{(\eta,\text N)}}'(0;s)=\sum_{n=1}^\infty \frac{1}{(\la_n^{(\eta;\text N)})^s}={\zeta_0^{(\eta,\text N)}}'(0;s)+\sum_{p\in\mathbb{Z},p\neq 0}\zeta_p^{(\eta,\text N)}(0;s),
\ee
and 
\be
\ln{\det}_{\text{N}}'(\Delta_\eta)=-\partial_s{\zeta^{(\eta,\text N)}}'(0;s)|_{s=0}=-\partial_s\Big[{\zeta_0^{(\eta,\text N)}}'(0;s)+\sum_{p\in\mathbb{Z},p\neq 0}\zeta_p^{(\eta,\text N)}(0;s)\Big]\Big|_{s=0}.
\ee
Note that from the above forms of the massive and massless Neumann determinants, we get
 \be
 \begin{split}
\ln\Big[\frac{{\det}_{\text{N}}(\Delta_\eta+m^2)}{{\det}_{\text{N}}'(\Delta_\eta)}\Big]=-\partial_s\Big[&\zeta_0^{(\eta,\text N)}(m^2;s)-{\zeta_0^{(\eta,\text N)}}'(0;s)\\
&+\sum_{p\in\mathbb{Z},p\neq 0}\Big(\zeta_p^{(\eta,\text N)}(m^2;s)-\zeta_p^{(\eta,\text N)}(0;s)\Big)\Big]|_{s=0}.
\end{split}
\label{reg. logratio det}
\ee
This may suggest a naive expression for $\ln\Big[\frac{{\det}_{\text{N}}(\Delta_\eta+m^2}{{\det}_{\text{N}}'(\Delta_\eta)}\Big]$ of the following form:
\begin{equation}
\begin{split}
\Bigg(\ln\Big[\frac{{\det}_{\text N}(\Delta_\eta+m^2)}{{\det}_{\text N}'(\Delta_\eta)}\Big]\Bigg)_{\text{naive}}
&=\ln\Big[\frac{{\det}_{\text N} (L_0^{(\eta)}+m^2)}{{\det}_{\text N}' (L_0^{(\eta)})}\Big]+\sum_{p\neq 0}\ln\Big[ \frac{{\det}_{\text N} (L_p^{(\eta)}+m^2)}{{\det}_{\text N} (L_p^{(\eta)})}\Big]\\
&=\ln(m^2)+\sum_{k=1}^\infty\ln \Big(1+\frac{m^2}{\lambda_{0,k}^{(\eta;\text N)}}\Big)+\sum_{p\neq0}^\infty\sum_{k=0}^\infty\ln \Big(1+\frac{m^2}{\lambda_{p,k}^{(\eta;\text N)}}\Big)\\
&=\ln(m^2)+\sum_{n=1}^\infty\ln \Big(1+\frac{m^2}{\lambda_{n}^{(\eta;\text N)}}\Big).
\end{split}
\label{naivelogratio}
\end{equation}
However, Weyl's law in two dimensions \cite{Weyllaw1, Weyllaw2} gives the following asymptotic form for $\la^{(\eta;\text{N})}_{n}$:
\be \label{Weyl law 2d}\la^{(\eta;\text{N})}_{n}\underset{n\rightarrow\infty}{\sim}\frac{4\pi n}{A_{\eta}}\, \cvp\ee
where $A_{\eta}$ is the area of the disk.  From this, we find that
\begin{equation}
\begin{split}
&\ln \Big(1+\frac{m^2}{\lambda_{n}^{(\eta;\text N)}}\Big)\underset{n\rightarrow\infty}{\sim}\frac{m^2 A_\eta}{4\pi n},
\end{split}
\end{equation}
which  implies that the series in the last line of \eqref{naivelogratio} diverges. This divergence arose because in going from \eqref{reg. logratio det} and \eqref{naivelogratio} we interchanged the order of taking the limit $s=0$ and the performing the infinite sum over $p$. However, these two operations do not commute. Another way to see this is to note that the  series expansion
\be
\label{zetadifseriesrep}
\zeta^{(\eta;\text N)}(m^2;s)-{\zeta^{(\eta;\text N)}}'(0;s)=\frac{1}{(m^2)^s}+\sum_{n=1}^\infty\Big[\frac{1}{(\la_n^{(\eta;\text N)}+m^2)^s}-\frac{1}{(\la_n^{(\eta;\text N)})^s}\Big]
\ee
is valid only when $\text{Re}(s)> 0$ or $s=0$. This can be verified by looking at the following asymptotic form of the summand which is obtained from the Weyl's law given in \eqref{Weyl law 2d}:
\be
\frac{1}{(\la_n^{(\eta;\text N)}+m^2)^s}-\frac{1}{(\la_n^{(\eta;\text N)})^s}\underset{n\rightarrow\infty}{\sim}-\frac{s m^2(A_{\eta})^{s+1}}{(4\pi)^{s+1} n^{s+1}}.
\ee
Naively it may seem that one can use the above series representation to evaluate the derivative of the LHS of \eqref{zetadifseriesrep} at $s=0$. However, taking such a derivative of the series in  \eqref{zetadifseriesrep} at $s=0$, results in terms which have the asymptotic form $-\frac{m^2 A^{\eta}}{4\pi n}$. This leads to the divergence in \eqref{naivelogratio} that we had noted earlier.

To avoid this problem, let us remove the would-be divergent piece by considering the function $\Big(\zeta^{(\eta;\text N)}(m^2;s)-{\zeta^{(\eta;\text N)}}'(0;s)+sm^2{\zeta^{(\eta;\text N)}}'(0;s+1)\Big)$ which has the following series representation 
\be
\begin{split}
&\zeta^{(\eta;\text N)}(m^2;s)-{\zeta^{(\eta;\text N)}}'(0;s)+sm^2{\zeta^{(\eta;\text N)}}'(0;s+1)\\
&=\frac{1}{(m^2)^s}+\sum_{n=1}^\infty \Big[\frac{1}{(\la_n^{(\eta;\text N)}+m^2)^s}-\frac{1}{(\la_n^{(\eta;\text N)})^s}+\frac{sm^2}{(\la_n ^{(\eta;\text N)})^{s+1}}\Big]\\
&=\frac{1}{(m^2)^s}+\sum_{k=1}^\infty\Big[\frac{1}{(\la_{0,k}^{(\eta;\text N)}+m^2)^s}-\frac{1}{(\la_{0,k}^{(\eta;\text N)})^s}+\frac{sm^2}{(\la_{0,k} ^{(\eta;\text N)})^{s+1}}\Big]\\
&\quad+\sum_{p\in \mathbb{Z},p\neq 0}\sum_{k=0}^\infty \Big[\frac{1}{(\la_{p,k}^{(\eta;\text N)}+m^2)^s}-\frac{1}{(\la_{p,k}^{(\eta;\text N)})^s}+\frac{sm^2}{(\la_{p,k}^{(\eta;\text N)})^{s+1}}\Big]
\end{split}
\label{zetadifregseriesrep}
\ee
that is valid in the domain $\text{Re}(s)>-1$. Now, we can safely calculate the derivative at $s=0$ which gives
\be
\begin{split}
&\ln\Big[\frac{{\det}_{\text{N}}(\Delta_\eta+m^2)}{{\det}_{\text{N}}'(\Delta_\eta)}\Big]
=-\partial_s\Big[\zeta^{(\eta;\text N)}(m^2;s)-{\zeta^{(\eta;\text N)}}'(0;s)\Big]_{|s=0}\\
&= m^2 C^{(\eta;\text N)}-\Big[\partial_s\Big(\zeta_0^{(\eta;\text N)}(m^2;s)-{\zeta_0^{(\eta;\text N)}}'(0;s)\Big)|_{s=0}+m^2{\zeta_0^{(\eta;\text N)}}'(0;1)\Big]\\
&\quad-\sum_{p\in\mathbb{Z},p\neq 0}\Big[\partial_s\Big(\zeta_p^{(\eta;\text N)}(m^2;s)-\zeta_p^{(\eta;\text N)}(0;s)\Big)|_{s=0}+m^2\zeta_p^{(\eta;\text N)}(0;1)\Big]\\
&= m^2 C^{(\eta;\text N)}+\Big[\ln\Big(\frac{{\det}_{\text N}(L_0^{(\eta)}+m^2)}{{\det}_{\text N}'(L_0^{(\eta)})}\Big)-m^2{\zeta_0^{(\eta;\text N)}}'(0;1)\Big]\\
&\quad+\sum_{p\in\mathbb{Z},p\neq 0}\Big[\ln\Big(\frac{{\det}_{\text N}(L_p^{(\eta)}+m^2)}{{\det}_{\text N}(L_p^{(\eta)})}\Big)-m^2\zeta_p^{(\eta;\text N)}(0;1)\Big],
\end{split}
\label{derzetadif}
\ee
where
\begin{equation}
\begin{split}
& C^{(\eta;\text N)}=\partial_s\Big(s{\zeta^{(\eta;\text N)}}'(0;s+1)\Big)_{|s=0}.
\end{split}
\label{CNdef}
\end{equation}
The quantity $C^{(\eta;\text N)}$ gives the finite piece of ${\zeta^{(\eta;\text N)}}'(0;s)$ at $s=1$ after removing the simple pole at this point. Using the Weyl's law \eqref{Weyl law 2d}, one can find that the residue of ${\zeta^{(\eta;\text N)}}'(0;s)$ at this pole is $\frac{A_\eta}{4\pi}$. So, an equivalent expression for $C^{(\eta;\text N)}$ is
\begin{equation}
\begin{split}
& C^{(\eta;\text N)}=\lim_{s\rightarrow1}\Big({\zeta^{(\eta;\text N)}}'(0;s)-\frac{1}{s-1}\frac{A_\eta}{4\pi}\Big).
\end{split}
\label{CNaltdef}
\end{equation}

The equation \eqref{derzetadif} shows how the ratio of the massive and the massless Neumann determinants is obtained from the ratios of the massive and the massless Sturm-Liouville determinants that we had derived in the previous subsection. Substituting these ratios of the massive and the massless Sturm-Liouville determinants by their values given in proposition \ref{SLdetcurved}, we get
\be
\label{inf series rep of Neumann det ratio}
\begin{split}
\ln\Big[\frac{{\det}_{\text{N}}(\Delta_\eta+m^2)}{{\det}_{\text{N}}'(\Delta_\eta)}\Big]
&= m^2 C^{(\eta;\text N)}+\Big[\ln\Big(\frac{1+\eta r_0^2}{2 r_0^2 L^2}g_0^{(\eta)}(-m^2 L^2)\Big)-m^2{\zeta_0^{(\eta;\text N)}}'(0;1)\Big]\\
&\quad+\sum_{p\in\mathbb{Z},p\neq 0}\Big[\ln\Big(\frac{g_p^{(\eta)}(-m^2 L^2)}{|p|}\Big)-m^2\zeta_p^{(\eta;\text N)}(0;1)\Big].
\end{split}
\ee
The only things that remain to be evaluated to completely determine the above infinite series are the quantities ${\zeta_0^{(\eta;\text N)}}'(0;1)$, $\zeta_p^{(\eta;\text N)}(0;1)$ for $p\neq 0$, and $C^{(\eta;\text N)}$. The quantities  ${\zeta_0^{(\eta;\text N)}}'(0;1)$ and $\zeta_p^{(\eta;\text N)}(0;1)$ for $p\neq 0$ can  be evaluated from certain relations that appeared in the proof of proposition \ref{SLdetcurved} as we will explain shortly. The quantity $C^{(\eta;\text N)}$ can be evaluated by considering the Neumann Green's functions of the scalar field in a manner analogous to that explained in \cite{ChauFerdet} for the Dirichlet boundary conditions. However, let us present an alternative representation of $\ln\Big[\frac{{\det}_{\text{N}}(\Delta_\eta+m^2}{{\det}_{\text{N}}'(\Delta_\eta)}\Big]$ that would allow us to bypass the actual computation of  $C^{(\eta;\text N)}$ and corroborate the result obtained in section \ref{sec: Determinant for general masses}. For this, let us note that in case of Dirichlet boundary condition, one can define  $\zeta$-functions from the eigenvalues of the Laplacian and the Sturm-Liouville operators in a manner analogous to those defined for the Neumann boundary conditions in this paper\footnote{These $\zeta$-functions have been discussed in detail in \cite{ChauFerdet}.}. These $\zeta$-functions are as follows:
\be
\begin{split}
&\zeta_p^{(\eta,\text D)}(0;s)=\sum_{k=0}^\infty \frac{1}{(\la_{p,k}^{(\eta;\text D)})^s} \ \ \text{for}\ \  p\in\mathbb{Z},\\
& \zeta^{(\eta,\text D)}(0;s)=\sum_{n=0}^\infty \frac{1}{(\la_{n}^{(\eta;\text D)})^s}=\sum_{p\in\mathbb{Z}}\zeta_p^{(\eta,\text D)}(0;s).
\end{split}
\ee
Here $\la_{p,k}^{(\eta;\text D)}$ are the eigenvalues of the Sturm-Liouville operator $L_p^{(\eta)}$, and $\la_{n}^{(\eta;\text D)}$ are the eigenvalues of $\Delta_\eta$, each of them being now defined for Dirichlet boundary condition. The difference with the Neumann case is that all these eigenvalues are strictly positive because the Dirichlet boundary condition does not allow for a nonzero constant mode on the disk. The main point to note here is that the asymptotic form of the eigenvalues  $\la_{n}^{(\eta;\text D)}$ following from Weyl's law \cite{Weyllaw1, Weyllaw2} is exactly the same as in the Neumann case (given in \eqref{Weyl law 2d}) at leading order, i.e.,
\be 
\label{Weyl law 2d Dirichlet}\la^{(\eta;\text{D})}_{n}\underset{n\rightarrow\infty}{\sim}\frac{4\pi n}{A_{\eta}}.
\ee
This leads to $\zeta^{(\eta,\text D)}(0;s)$ having the same residue as ${\zeta^{(\eta;\text N)}}'(0;s)$ at the pole $s=1$. Consequently, one can define the following quantity analogous to $C^{(\eta;\text N)}$ (see \eqref{CNaltdef}) which gives the finite piece of $\zeta^{(\eta;\text D)}(0;s)$ at $s=1$:
\be
\begin{split}
& C^{(\eta;\text D)}=\lim_{s\rightarrow1}\Big(\zeta^{(\eta;\text D)}(0;s)-\frac{1}{s-1}\frac{A_\eta}{4\pi}\Big).
\end{split}
\label{CDdef}
\ee
From this we can conclude that the difference between $C^{(\eta;\text N)}$ and $C^{(\eta;\text D)}$ should be a finite quantity that is given by
\be
\label{diff CN and CD}
\begin{split}
 C^{(\eta;\text N)}-C^{(\eta;\text D)}
 &=\lim_{s\rightarrow1}\Big({\zeta^{(\eta;\text N)}}'(0;s)-\zeta^{(\eta;\text D)}(0;s)\Big)\\
&={\zeta_0^{(\eta;\text N)}}'(0;1)-\zeta_0^{(\eta;\text D)}(0;1)+\sum_{p\in\mathbb{Z},p\neq 0}\Big[\zeta_p^{(\eta;\text N)}(0;1)-\zeta_p^{(\eta;\text D)}(0;1)\Big].
\end{split}
\ee
To check that the infinite series in the above equation is indeed convergent, let us note that $\zeta_p^{(\eta;\text D)}(0;1)$ was evaluated in \cite{ChauFerdet} and it is given by\footnote{In \cite{ChauFerdet}, $\zeta_p^{(\eta;\text D)}(0;1)$ was computed after setting $L=1$. To obtain the expression in \eqref{zeta_p Dirichlet}, one needs to restore the $L$-dependence by dimensional analysis. Such an analysis gives $\zeta_p^{(\eta;\text D)}(0;1)=L^2 \zeta_p^{(\eta;\text D)}(0;1)|_{L=1}$. See the discussion just below \eqref{CD for L=1}.}
\be
\label{zeta_p Dirichlet}
\begin{split}
\zeta_p^{(\eta;\text D)}(0;1)
&=\frac{A_\eta}{4\pi}\frac{1}{|p|+1}F\Big(1,1,|p|+2,\frac{\eta r_0^2}{1+\eta r_0^2}\Big)\\
&=\frac{L^2  r_0^2}{1+\eta r_0^2}\frac{1}{|p|+1}F\Big(1,1,|p|+2,\frac{\eta r_0^2}{1+\eta r_0^2}\Big).
\end{split}
\ee
To derive the form of $\zeta_p^{(\eta;\text N)}(0;1)$ for $p\neq 0$, note that from \eqref{g_p expr on complex plane}, we have
  \be
 \ln g_p^{(\eta)}(z)=\ln(|p|)+ \sum_{k\geq 0}\ln\Big(1-\frac{z}{\la_{p,k}^{(\eta;\text N)}L^2}\Big).
\ee
for  $z\in\mathbb{C}$. This implies that
  \be
-\partial_z \ln g_p^{(\eta)}(z)|_{z=0}=\sum_{k\geq 0}\frac{1}{\la_{p,k}^{(\eta;\text N)}L^2-z}\Big|_{z=0}=\frac{1}{L^2}\sum_{k\geq 0}\frac{1}{\la_{p,k}^{(\eta;\text N)}}=\frac{\zeta_p^{(\eta;\text N)}(0;1)}{L^2},
\ee
from which we get
  \be
\zeta_p^{(\eta;\text N)}(0;1)=-\frac{L^2}{  g_p^{(\eta)}(z)}\partial_z  g_p^{(\eta)}(z)|_{z=0}=-\frac{L^2}{|p|}\partial_z  g_p^{(\eta)}(z)|_{z=0}.
\ee
Here we have used the fact that $\lim_{z\rightarrow 0}g_p^{(\eta)}(z)=|p|$ as was obtained in \eqref{zero limit of g_p}. Substituting $g_p^{(\eta)}(z)$ by its expression given in \eqref{g_n def.}, we get
\be
\begin{split}
\zeta_p^{(\eta;\text N)}(0;1)
&=L^2\Big[-\eta(\partial_a-\partial_b)F\Big(a,b,|p|+1,\frac{\eta r_0^2 }{1+\eta r_0^2}\Big)\Big|_{a=1,b=0}\\
&\qquad\quad+\frac{2 r_0^2}{|p|(|p|+1)(1+\eta r_0^2)^2}F\Big(2,1,|p|+2,\frac{\eta r_0^2}{1+\eta r_0^2 }\Big)\Big]\\
&=\frac{L^2  r_0^2}{1+\eta r_0^2}\frac{1}{|p|+1}\Big[F\Big(1,1,|p|+2,\frac{\eta r_0^2 }{1+\eta r_0^2}\Big)\\
&\qquad\qquad\qquad\qquad+\frac{2}{|p|(1+\eta r_0^2)}F\Big(2,1,|p|+2,\frac{\eta r_0^2}{1+\eta r_0^2 }\Big)\Big].
\end{split}
\label{zeta_p Neumann}
\ee
Here, in deriving the second equality,  we have used the identities
\be
\partial_aF(a,b,|p|+1,z)|_{a=1,b=0}=0,\ \partial_b F(a,b,|p|+1,z)|_{a=1,b=0}=\frac{z}{|p|+1}F(1,1,|p|+2,z),
\ee
and the fact that $\eta^2=1$ for $\eta=\pm 1$. Although it is not necessary to derive the form of ${\zeta_0^{(\eta;\text N)}}'(0;1)$ for checking the convergence of the series in \eqref{diff CN and CD}, let us mention that it can be derived in exactly the same manner as above starting from the relation \eqref{g_0 expr on complex plane}, and its value is
\be
{\zeta_0^{(\eta;\text N)}}'(0;1)=L^2\Big[-\frac{\ln(1+\eta r_0^2)}{r_0^2}+\eta\Big].
\ee
Now, from the expressions of $\zeta_p^{(\eta;\text D)}(0;1)$ and $\zeta_p^{(\eta;\text N)}(0;1)$ given in \eqref{zeta_p Dirichlet} and \eqref{zeta_p Neumann} respectively, we can see that the terms of the series in \eqref{diff CN and CD} are of the form
 \be
\zeta_p^{(\eta;\text N)}(0;1)-\zeta_p^{(\eta;\text D)}(0;1)
=\frac{2L^2  r_0^2}{(1+\eta r_0^2)^2}\frac{1}{|p|(|p|+1)}F\Big(2,1,|p|+2,\frac{\eta r_0^2}{1+\eta r_0^2 }\Big).
\ee
Using the asymptotic form of the hypergeometric function given in \eqref{hypgeom asymp large c}, we get the following behaviour of these terms in the large $|p|$ regime:
 \be
\zeta_p^{(\eta;\text N)}(0;1)-\zeta_p^{(\eta;\text D)}(0;1)
=\frac{2L^2  r_0^2}{(1+\eta r_0^2)^2}\frac{1}{|p|^2}+O(\frac{1}{|p|^3}).
\ee
From this asymptotic form of the terms of the series in \eqref{diff CN and CD}, we can conclude that this series is convergent.

Substituting $C^{(\eta;\text N)}$ in \eqref{inf series rep of Neumann det ratio} by its expression that can be obtained from \eqref{diff CN and CD}, we find the following infinite series representation of $\ln\Big[\frac{{\det}_{\text{N}}(\Delta_\eta+m^2)}{{\det}_{\text{N}}'(\Delta_\eta)}\Big]$:
\be
\label{alt inf series rep of Neumann det ratio}
\begin{split}
\ln\Big[\frac{{\det}_{\text{N}}(\Delta_\eta+m^2)}{{\det}_{\text{N}}'(\Delta_\eta)}\Big]
&= m^2 C^{(\eta;\text D)}+\Big[\ln\Big(\frac{1+\eta r_0^2}{2 r_0^2 L^2}g_0^{(\eta)}(-m^2 L^2)\Big)-m^2 \zeta_0^{(\eta;\text D)}(0;1)\Big]\\
&\quad+\sum_{p\in\mathbb{Z},p\neq 0}\Big[\ln\Big(\frac{g_p^{(\eta)}(-m^2 L^2)}{|p|}\Big)-m^2\zeta_p^{(\eta;\text D)}(0;1)\Big].
\end{split}
\ee
The value of $C^{(\eta;\text D)}$ was computed in \cite{ChauFerdet} after setting $L=1$, and it is given by
\be
\label{CD for L=1}
C^{(\eta;\text D)}|_{L=1}=\frac{A_\eta|_{L=1}}{2\pi}(\gamma-1+\ln r_0).
\ee
To restore the $L$-dependence in $C^{(\eta;\text D)}$, note that the eigenvalues $\la_n^{(\eta;\text D)}$ scale as $L^{-2}$ with $L$ which implies that $\zeta^{(\eta;\text D)}(0;s)$ scales as $L^{2s}$ with $L$. So, the expansion of $\zeta^{(\eta;\text D)}(0;s)$ around $s=1$ goes as
\be
\begin{split}
\zeta^{(\eta;\text D)}(0;s)
&=L^{2}\Big[1+2(s-1)\ln L+O((s-1)^2)\Big]\\
&\qquad\Big[\frac{1}{s-1}\frac{A_\eta|_{L=1}}{4\pi}+C^{(\eta;\text D)}|_{L=1}+O(s-1)\Big]\\
&=\frac{1}{s-1}\frac{A_\eta}{4\pi}+L^{2} C^{(\eta;\text D)}|_{L=1}+\frac{A_\eta}{2\pi}\ln L+O(s-1),
\end{split}
\ee
where we have used the fact that $A_\eta=L^2 A_\eta|_{L=1}$. From the above expansion of $\zeta^{(\eta;\text D)}(0;s)$ around $s=1$ we can read off $C^{(\eta;\text D)}$, and it is given by
\be
\label{value of CD}
C^{(\eta;\text D)}=L^{2} C^{(\eta;\text D)}|_{L=1}+\frac{A_\eta}{2\pi}\ln L=\frac{A_\eta}{2\pi}(\gamma-1+\ln r_0+\ln L).
\ee
Substituting $C^{(\eta;\text D)}$ and $\zeta_p^{(\eta;\text D)}$ in \eqref{alt inf series rep of Neumann det ratio} by their values given in \eqref{value of CD} and \eqref{zeta_p Dirichlet} respectively, we get
\be
\label{alt inf series rep of Neumann det ratio:final exp}
\begin{split}
\ln\Big[\frac{{\det}_{\text{N}}(\Delta_\eta+m^2)}{{\det}_{\text{N}}'(\Delta_\eta)}\Big]
&=\Big(-2+\frac{A_\eta m^2}{2\pi}\Big)\ln L+\ln\Big(\frac{1+\eta r_0^2}{2 r_0^2 }\Big) +\frac{ A_\eta m^2}{2\pi}(\gamma-1+\ln r_0)\\
&\quad+\ln\Big[g_0^{(\eta)}(-m^2 L^2) e^{-\frac{A_\eta m^2}{4\pi}F(1,1,2,\frac{\eta A_\eta}{4\pi L^2})}\Big]\\
&\quad+\sum_{p\in\mathbb{Z},p\neq 0}\ln\Big[\frac{g_p^{(\eta)}(-m^2 L^2)}{|p|}e^{-\frac{A_\eta m^2}{4\pi(|p|+1)}F(1,1,|p|+2,,\frac{\eta A_\eta}{4\pi L^2})}\Big].
\end{split}
\ee
Now, let us note that the logarithm of the massless Neumann determinant ${\det}_{\text{N}}'(\Delta_\eta)$ can be extracted by combining the equations \eqref{Weyl-invariant Neumann}, \eqref{Liouville action} and \eqref{massless Neumann det: flat unit disk}. It is given by
 \be
 \label{massless Neumann det expr.}
 \begin{split}
 \ln {\det}_{\text N}' (\Delta_\eta)
 &=\Big(2-\frac{1}{3}\Big)\ln L -\frac{1}{2}\ln(2\pi)+\frac{7}{12}-2\zeta_{\text R}'(-1)+\ln\Big(\frac{4\pi r_0^2}{1+\eta r_0^2}\Big)\\
 &\quad-\frac{\eta A_\eta}{6\pi L^2}- \frac{1}{3} \ln(r_0 ).
 \end{split}
 \ee
Combining \eqref{alt inf series rep of Neumann det ratio:final exp} and \eqref{massless Neumann det expr.}, we finally get the following expression for the logarithm of the Neumann determinant of $(\Delta_\eta+m^2)$:
\be
\label{Neumann det gen masses via alt approach}
\begin{split}
\ln\Big[{\det}_{\text{N}}(\Delta_\eta+m^2)\Big]
&=\Big(-\frac{1}{3}+\frac{A_\eta m^2}{2\pi}\Big)\ln L +\frac{1}{2}\ln(2\pi)+\frac{7}{12}-2\zeta_{\text R}'(-1)\\
 &\quad-\frac{\eta A_\eta}{6\pi L^2}- \frac{1}{3} \ln(r_0 )+\frac{ A_\eta m^2}{2\pi}(\gamma-1+\ln r_0)\\
&\quad+\ln\Big[g_0^{(\eta)}(-m^2 L^2) e^{-\frac{A_\eta m^2}{4\pi}F(1,1,2,\frac{\eta A_\eta}{4\pi L^2})}\Big]\\
&\quad+\sum_{p\in\mathbb{Z},p\neq 0}\ln\Big[\frac{g_p^{(\eta)}(-m^2 L^2)}{|p|}e^{-\frac{A_\eta m^2}{4\pi(|p|+1)}F(1,1,|p|+2,,\frac{\eta A_\eta}{4\pi L^2})}\Big].
\end{split}
\ee
Using the fact that $g_p^{(\eta)}(-m^2 L^2)=g_{-p}^{(\eta)}(-m^2 L^2)$, one can clearly see that the above expression is in perfect agreement with the result obtained in \eqref{Neumann determinants:curved disk}.

\section{An example: Neumann determinant on a hemisphere}
\label{appendix: hemisphere}

In this appendix we will discuss a particular case of the positive curvature disk, viz. the case where the disk corresponds to a hemisphere. We will show that the infinite series representation for the Neumann determinant with an arbitrary mass can be exactly computed in this case and the result can be expressed in terms of the Barnes $G$-function. The similar computation for Dirichlet boundary condition was done in \cite{ChauFerdet}. As we have discussed earlier, there would be additional contributions to the logarithm of the Neumann determinant from the integral over the extrinsic curvature  at the boundary and the determinant of the Dirichlet-to-Neumann map. We will compute the contribution of the determinant of the Dirichlet-to-Neumann map from the general result derived in section \ref{sec: det DN map}. This calculation of the determinant of the Dirichlet-to-Neumann map for the hemisphere was also done in \cite{KMW}. We will show that our result matches exactly with theirs. We will then add the contribution of the logarithm of this determinant to the logarithm of the Dirichlet determinant obtained in \cite{ChauFerdet} to extract the result for the logarithm of the Neumann determinant.

Having outlined the strategy, let us  proceed with the computation of the determinant of the Dirichlet-to-Neumann map for the hemisphere. In case of the hemisphere with radius $L$, we have $\eta=1$ and $r_0=1$. Hence, the parameter $\frac{r_0^2}{1+r_0^2}$ that enters into the hypergeometric functions in the definitions of $f_p^{(+)}(1;z)$ and $g_p^{(+)}(z)$ as an argument (see \eqref{f_n def.} and \eqref{g_n def.}) is $\frac{1}{2}$. In this case, the hypergeometric functions simplify due to the general identities given below:
\be
\begin{split}
&F\Big(a,1-a,c,\frac{ 1 }{2}\Big)=\frac{2^{1-c}\sqrt{\pi}\Gamma(c)}{\Gamma(\frac{a+c}{2})\Gamma(\frac{c-a+1}{2})},\\
&F\Big(a,3-a,c,\frac{ 1 }{2}\Big)=\frac{2^{3-c}\sqrt{\pi}\Gamma(c)}{(a-1)(a-2)}\Big[\frac{c-2}{\Gamma(\frac{a+c-2}{2})\Gamma(\frac{c-a+1}{2})}-\frac{2}{\Gamma(\frac{a+c-3}{2})\Gamma(\frac{c-a}{2})}\Big].
\end{split}
\ee
Using these identities, we get that  $f_p^{(+)}(1;z)$ and $g_p^{(+)}(z)$ defined in \eqref{f_n def.} and \eqref{g_n def.} reduce to  the following functions:
\be
\label{f_p hemisphere}
f_p^{(+)}(1;z)=\frac{2^{-p}\sqrt{\pi}\Gamma(p+1)}{\Gamma(\frac{2p+3+\sqrt{1+4z}}{4})\Gamma(\frac{2p+3-\sqrt{1+4z}}{4})}\ \text{for}\ p\geq 0,
\ee
\be
\label{g_p hemisphere, nonzero p}
\begin{split}
g_p^{(+)}(z)
&=p \Big[\frac{2^{-(p-1)}\sqrt{\pi}\Gamma(p)}{\Gamma(\frac{2p+1+\sqrt{1+4z}}{4})\Gamma(\frac{2p+1-\sqrt{1+4z}}{4})}\Big],\ \text{for}\ p>0.
\end{split}
\ee
and
\be
\label{g_p hemisphere, zero p}
g_0^{(+)}(z)=\sqrt{\pi}\Big[\frac{2}{\Gamma(\frac{1+\sqrt{1+4z}}{4})\Gamma(\frac{1-\sqrt{1+4z}}{4})}\Big].
\ee
Notice that from the above expressions, we get
\be
g_p^{(+)}(z)=p f_{p-1}^{(+)}(1;z)\ \text{for}\ p>0.
\ee
Substituting this in the expression of the logarithm of the determinant of the Dirichlet-to-Neumann map given in \eqref{det DN map final result}, we get
 \be
 \begin{split}
\ln {\det}(\mathscr{O}_{m^2})
&=\ln(2\pi)+\ln\Big(\frac{g_0^{(+)}(-m^2 L^2)}{f_0^{(+)}(1;-m^2 L^2)}\Big)+2\sum_{p=1}^\infty\ln\Big(\frac{ f_{p-1}^{(+)}(1;-m^2L^2)}{f_p^{(+)}(1;-m^2 L^2)}\Big)\\
&=\ln(2\pi)+\ln\Big(\frac{g_0^{(+)}(-m^2 L^2)}{f_0^{(+)}(1;-m^2 L^2)}\Big)+2\lim_{N\rightarrow\infty}\sum_{p=1}^N\ln\Big(\frac{ f_{p-1}^{(+)}(1;-m^2L^2)}{f_p^{(+)}(1;-m^2 L^2)}\Big).
\end{split}
\ee
Here, in the second line we have replaced the infinite sum by the limit of a finite sum. We can now use the fact that
\be
\begin{split}
\sum_{p=1}^N\ln\Big(\frac{ f_{p-1}^{(+)}(1;-m^2L^2)}{f_p^{(+)}(1;-m^2 L^2)}\Big)
&=\sum_{p=1}^N\Big[\ln\Big(f_{p-1}^{(+)}(1;-m^2L^2)\Big)-\ln\Big(f_p^{(+)}(1;-m^2 L^2)\Big)\Big]\\
&=\ln\Big(f_{0}^{(+)}(1;-m^2L^2)\Big)-\ln\Big(f_N^{(+)}(1;-m^2 L^2)\Big),
\end{split}
\ee
to obtain
 \be
 \begin{split}
\ln {\det}(\mathscr{O}_{m^2})
&=\ln(2\pi)+\ln\Big(g_0^{(+)}(-m^2 L^2)f_0^{(+)}(1;-m^2 L^2)\Big)\\
&\quad-2\lim_{N\rightarrow\infty}\ln\Big(f_N^{(+)}(1;-m^2 L^2)\Big).
\end{split}
\ee
In the above expression, we can then substitute $f_0^{(+)}(1;-m^2 L^2)$, $f_N^{(+)}(1;-m^2 L^2)$ and $g_0^{(+)}(-m^2 L^2)$ by their values given in \eqref{f_p hemisphere} and \eqref{g_p hemisphere, zero p} to get
 \be
 \begin{split}
\ln {\det}(\mathscr{O}_{m^2})
&=2\ln(2\pi)-\ln\Big[\Gamma\Big(\frac{1+\sqrt{1-4m^2L^2}}{4}\Big)\Gamma\Big(\frac{3+\sqrt{1-4m^2L^2}}{4}\Big)\Big]\\
&\quad-\ln\Big[\Gamma\Big(\frac{1-\sqrt{1-4m^2 L^2}}{4})\Big)\Gamma\Big(\frac{3-\sqrt{1-4m^2L^2}}{4}\Big)\Big]\\
&\quad-2\lim_{N\rightarrow\infty}\Bigg[\frac{1}{2}\ln(\pi)-N\ln(2)+\ln\Gamma(N+1)\\
&\quad\quad-\ln\Gamma\Big(\frac{2N+3+\sqrt{1-4m^2L^2}}{4}\Big)-\ln\Gamma\Big(\frac{2N+3-\sqrt{1-4m^2L^2}}{4}\Big)\Bigg].
\end{split}
\ee
The large $N$ limit in the above expression actually vanishes as can be verified by using the following asymptotic expansion of the $\Gamma$ function:
 \be
\ln \Gamma(z+h)=(z+h-\frac{1}{2})\ln(z)-z+\frac{1}{2}\ln(2\pi) +O(1/z)\ \ \text{as}\ \ z\rightarrow\infty.
\ee
So, we have
 \be
 \begin{split}
\ln {\det}(\mathscr{O}_{m^2})
&=2\ln(2\pi)-\ln\Big[\Gamma\Big(\frac{1+\sqrt{1-4m^2L^2}}{4}\Big)\Gamma\Big(\frac{3+\sqrt{1-4m^2L^2}}{4}\Big)\Big]\\
&\quad-\ln\Big[\Gamma\Big(\frac{1-\sqrt{1-4m^2 L^2}}{4}\Big)\Gamma\Big(\frac{3-\sqrt{1-4m^2L^2}}{4}\Big)\Big].
\end{split}
\ee
Using the Legendre duplication formula,
 \be
\Gamma(z)\Gamma(z+\frac{1}{2})=2^{1-2z}\sqrt{\pi}\Gamma(2z),
\ee
the product of the two $\Gamma$ functions within each of the logarithms in the above expression can be replaced by a single $\Gamma$ function. Then combining these logarithms, we get
 \be
 \begin{split}
\ln {\det}(\mathscr{O}_{m^2})
&=\ln(2\pi)-\ln\Big[\Gamma\Big(\frac{1+\sqrt{1-4m^2L^2}}{2}\Big)\Gamma\Big(\frac{1-\sqrt{1-4m^2 L^2}}{2}\Big)\Big].
\end{split}
\ee
Finally, using the Euler reflection formula,
 \be
\Gamma(z)\Gamma(1-z)=\frac{\pi}{\sin(\pi z)},
\ee
we find
 \be
 \label{DN map det: hemisphere}
 \begin{split}
\ln {\det}(\mathscr{O}_{m^2})
&=\ln(2\pi)-\ln\Bigg[\frac{\pi}{\sin\Big(\pi(\frac{1-\sqrt{1-4m^2 L^2}}{2})\Big)}\Bigg]=\ln\Bigg[2\cos\Big(\frac{\pi}{2}(\sqrt{1-4m^2 L^2})\Big)\Bigg].
\end{split}
\ee
Let us note that this is exactly the expression of the determinant of the Dirichlet-to-Neumann map for the hemisphere that was found in \cite{KMW}.

Having obtained the determinant of the Dirichlet-to-Neumann map, we just need the Dirichlet determinant and the value of $b_0$ to complete the computation of the Neumann determinant as is evident from the relation given in \eqref{DirNeurel:two dim}. We had obtained the value of $b_0$ in \eqref{b0 for const curv disks final expr} for any constant curvature. For the hemisphere, it simply vanishes because the extrinsic curvature at the boundary vanishes. So, we have $b_0=0$. As for the Dirichlet determinant, it was evaluated in \cite{ChauFerdet}, and has the following expression\footnote{In \cite{ChauFerdet} the Dirichlet determinant was calculated after setting $L=1$. Here we have  restored the $L$-dependence.}:
\be
 \label{Dirichlet det: hemisphere}
\begin{split}
\ln {\det}_{\text D} (\Delta_++m^2)
&=\Big(-\frac{1}{3}+m^2 L^2\Big)\ln(L)-\frac{1}{2}\ln(2\pi)+\frac{1}{4}-2\zeta_{\text R}'(-1)\\
&\quad-m^2 L^2-\ln\Big[\frac{1}{\pi}\cos\Big(\frac{\pi}{2}\sqrt{1-4m^2 L^2}\Big)\Big]\\
&\quad+\ln\Big[G\Big(\frac{1}{2}(1+\sqrt{1-4m^2 L^2})\Big)G\Big(\frac{1}{2}(1-\sqrt{1-4m^2 L^2})\Big)\Big],
\end{split}
\ee
where $G(z)$ is the Barnes $G$-function which satisfies $G(z+1)=\Gamma(z)G(z)$. 
Adding the logarithm of the determinant of the Dirichlet-to-Neumann map given in \eqref{DN map det: hemisphere} to the above expression, we immediately get the following expression for the logarithm of the Neumann determinant:
\be
 \label{log Neumann det: hemisphere}
\begin{split}
\ln {\det}_{\text N} (\Delta_++m^2)
&=\Big(-\frac{1}{3}+m^2 L^2\Big)\ln(L)+\frac{1}{2}\ln(2\pi)+\frac{1}{4}-2\zeta_{\text R}'(-1)-m^2 L^2\\
&\quad+\ln\Big[G\Big(\frac{1}{2}(1+\sqrt{1-4m^2 L^2})\Big)G\Big(\frac{1}{2}(1-\sqrt{1-4m^2 L^2})\Big)\Big].
\end{split}
\ee
Taking the exponential of this gives the following expression of the Neumann determinant on the hemisphere:
\be
\begin{split}
 {\det}_{\text N} (\Delta_++m^2)
&=(2\pi)^{\frac{1}{2}} L^{-\frac{1}{3}+m^2 L^2}\exp\Big[\frac{1}{4}-2\zeta_{\text R}'(-1)-m^2 L^2\Big]\\
&\quad G\Big(\frac{1}{2}(1+\sqrt{1-4m^2 L^2})\Big)G\Big(\frac{1}{2}(1-\sqrt{1-4m^2 L^2})\Big).
\end{split}
\ee

Let us make a comment on this determinant when the expression is analytically continued to negative values of $m^2$, and $m^2$ is set equal to $-\frac{q(q+1)}{L^2}$ for $q\in \mathbb{N}$, which is the case we considered in section \ref{sec: Neumann det special masses}. In this case, we have
\be
\begin{split}
 {\det}_{\text N} \Big(\Delta_+-\frac{q(q+1)}{L^2}\Big)
&=(2\pi)^{\frac{1}{2}} L^{-\frac{1}{3}-q(q+1)}\exp\Big[\frac{1}{4}-2\zeta_{\text R}'(-1)+q(q+1)\Big]\\
&\qquad G(1+q)G(-q).
\end{split}
\ee
Now, $G(-q)$ vanishes for all $q\in\mathbb{N}$. This is because we can write
\be
G(-q)=\lim_{x\rightarrow 0}G(x-q)=\lim_{x\rightarrow 0}\Big[\frac{G(x+1)}{\prod_{i=0}^{q}\Gamma(x-i)}\Big],
\ee
and the $\Gamma$ function has poles at negative integers, while $\lim_{x\rightarrow 0}G(x+1)=G(1)=1$. Therefore, we have 
\be
 {\det}_{\text N} \Big(\Delta_+-\frac{q(q+1)}{L^2}\Big)=0\ \text{for all}\ q\in\mathbb{N}.
\ee
We had shown this to be true for $q=1$ in section \ref{sec: Neumann det special masses} but it is more generally valid for all positive integers as shown above.

The vanishing of the above determinants can be understood in terms of the eigenvalues of the Laplacian. To explain this, let us recall from the discussion in appendix \ref{app: alternative approach to Neumann det}  that these eigenvalues are obtained from the eigenvalues of the Sturm-Liouville operators given in \eqref{Sturm-Liouville operator}. For the hemisphere, these Sturm-Liouville operators are  given by
\be
 \label{Sturm-Liouville operator:hemisphere}
 L_p^{(+)}=-\frac{(1+ r^2)^2}{4L^2} \Big(\frac{d^2}{dr^2}+\frac{1}{r}\frac{d}{dr}-\frac{p^2}{r^2}\Big)\ \text{with}\ p\in\mathbb{Z}.
\ee
The eigenvalues of this operator (for Neumann boundary condition) are determined by solving the equation 
\be
g_p^{(+)}(\la L^2)=0,
\ee
with $\la$ being the corresponding eigenvalue of $ L_p^{(+)}$.
 When $p\geq0$,  we can see from the expression of $g_p^{(+)}$ given in \eqref{g_p hemisphere, nonzero p} and \eqref{g_p hemisphere, zero p} that the solutions to the above equation are the same as the solutions to the equation
 \be
 \label{eqn. for eigenvalue of SL operator: hemisphere}
\frac{1}{\Gamma(\frac{2p+1+\sqrt{1+4\la L^2}}{4})\Gamma(\frac{2p+1-\sqrt{1+4\la L^2}}{4})}=0.
\ee
Let us focus on the function $\Gamma(\frac{2p+1-\sqrt{1+4\la L^2}}{4})$. If we set $\la=\frac{q(q+1)}{L^2}$ with $q\in\mathbb{N}$, the inverse of  this function reduces to $1/\Gamma(\frac{p-q}{2})$, which vanishes for $p=q$.\footnote{In fact, $1/\Gamma(\frac{p-q}{2})$ vanishes whenever $(q-p)$ is a non-negative even integer.} From this we can conclude that  for any $q\in\mathbb {N}$, $\la=\frac{q(q+1)}{L^2}$ is an eigenvalue of $L_q^{(+)}$, and therefore it is also an eigenvalue of the Laplacian $\Delta_+$. This leads to the vanishing of the determinant of $(\Delta_+-\frac{q(q+1)}{L^2})$.


\begin{thebibliography}{99}

\bibitem{ChauFerdet}{S.~Chaudhuri and F.~Ferrari, ``Dirichlet Scalar Determinants On Two-Dimensional Constant Curvature Disks,'' \emph{Ann. Henri Poincar\'e} (2025), arXiv:2405.14958.}

\bibitem{Dunne}{G.~V.~Dunne, ``Functional determinants in quantum field theory,''  J.Phys.A \textbf{41} (2008) 304006, arXiv: 0711.1178.}

\bibitem{DHokerPhong1}{E.~D'Hoker and D.~H.~Phong, ``Multiloop Amplitudes for the Bosonic Polyakov String'',\npb{269}{1986}{205-234}.}


\bibitem{RaySinger}{D.~B.~Ray and I.~M.~Singer, ``R-torsion and the Laplacian on Riemannian manifolds,'' \emph{Advances in Mathematics}, \textbf{7} (1971) 145-210.
}

\bibitem{DowkerCritchley}{J.~S.~Dowker and R.~Critchley, ``Effective Lagrangian and Energy Momentum Tensor in de Sitter Space,'' \prd{13}{1976}{3224}.}

\bibitem{Hawking}{S.~W.~Hawking, \emph{Zeta function regularization of path integrals in curved space-time}, \cmp{55}{1977}{133}.} 

\bibitem{Sarnak}{P.~Sarnak, ``Determinants of Laplacians,'' \cmp{110}{1987}{113}.}

\bibitem{DHokerPhong2}{E.~D'Hoker and D.H.~Phong, ``On determinants of Laplacians on Riemann surfaces,'' \cmp{104}{1986}{537}.}
%
\bibitem{Voros}{A.~Voros, ``Spectral Functions, Special Functions and Selberg Zeta Function,'' \cmp{110}{1987}{439}.}

\bibitem{bfref}{A.~Bilal and F.~Ferrari, ``Multi-Loop Zeta Function Regularization and Spectral Cutoff in Curved Spacetime,'' \npb{877}{2013}{956}, arXiv:1307.1689 [hep-th].}




\bibitem{CamporesiHiguchi}{R.~Camporesi and A.~Higuchi, ``Spectral functions and zeta functions in hyperbolic spaces,'' \emph{J.\ Math.\ Phys.\ }\textbf{35} (1994) 4217.}

\bibitem{Helgason}{S.~Helgason, ``Groups and Geometric Analysis,'' AMS, ISBN:978-0-8218-3211-0.}
%
\bibitem{BorthwickJudgePerry}{D.~Borthwick, C.~Judge and P.A.~Perry, ``Determinants of Laplacians and Isopolar Metrics on Surfaces of Infinite Area,'' \emph{Duke Math.\ J.\ }\textbf{118} (2003) 61, arXiv:math/0111211 [math.DG].}
%
\bibitem{DFPRS}{J.~Aguilera-Damia, A.~Faraggi, L.~Pando Zayas, V.~Rathee, and G.A.~Silva, ``Functional determinants of radial operators in $AdS_2$,''\jhep{06}{2018}{007}, arXiv:1802.06789.}

\bibitem{Weis}{W.I.~Weisberger, ``Conformal Invariants for Determinants of Laplacians on 
Riemann Surfaces,'' \cmp{112}{1987}{633}. }

\bibitem{GuillarmouGuillope}{C.~Guillarmou and L.~Guillop\'e, ``The determinant of the Dirichlet-to-Neumann map for surfaces with boundary,'' \emph{Int.\ Math.\ Res.\ Not.\ IMRN} \textbf{22} (2007) article id.\ rnm099, arXiv:math/0701727.}


\bibitem{KMW}{S.~Kandel, P.~Mnev and K.~Wernli, ``Two-dimensional perturbative scalar QFT and Atiyah–Segal gluing,'' \emph{Adv.Theor.Math.Phys.\ }\textbf{25} (2021) 7, 1847-1952, arXiv: 1912.11202 [math-ph].}

\bibitem{Loopcalc}{S.~Chaudhuri and F.~Ferrari, ``Finite cut-off JT and Liouville quantum gravities on the disk at one loop,''\jhep{01}{2025}{160}, arXiv:2404.03748.}

\bibitem{KirstLee1}{K.~Kirsten and Y.~Lee, ``The BFK type gluing formula of zeta-determinants for the Robin Boundary condition,'' arXiv:2306.17572.}

\bibitem{KirstLee2}{K.~Kirsten and Y.~Lee, ``The zeta-determinant of the Dirichlet-to-Neumann operator of the Steklov Problem on forms,'' \emph{Ann Glob Anal Geom} 66, 16 (2024), arXiv:2404.14562.}

\bibitem{ChauFerNeufree}{S.~Chaudhuri and F.~Ferrari, ``Notes on Free versus Robin Boundary Conditions,'' to appear.}

\bibitem{Wentworth}{R.~A.~Wentworth ``A Meyer-Vietoris formula for the determinant of the Dirichlet-to-Neumann operator on Riemann surfaces,'' \emph{J Geom Anal} 33, 51 (2023), arXiv:2209.11863.}

\bibitem{EdwardWu}{J.~Edward and S.~Wu, ``Determinant of the Neumann operator on smooth Jordan curves,'' \emph{Proc. Amer. Math. Soc.}, 111(2) (1991), 357–363.}

\bibitem{Temme}{N.~M.~Temme, ``Large Parameter Cases of the Gauss Hypergeometric Function,'' \emph{Journal of Computational and Applied Mathematics}, vol. 153, no. 1-2, pp. 441-462, arXiv:math/0205065.}

\bibitem{Weyllaw1}{W.~Arendt, R.~Nittka, W.~Peter and F.~Steiner, \emph{Weyl's Law: Spectral Properties of the Laplacian in Mathematics and Physics,} Mathematical Analysis of Evolution, Information and Complexity, Wiley.}

\bibitem{Weyllaw2}{C.T.~Fulton and S.A.~Pruess, ``Eigenvalue and Eigenfunction Asymptotics for Regular Sturm-Liouville Problems,'' \emph{J.\ Math.\ Anal.\ Appl.\ }\textbf{188} (1994) 297.}
%

\bibitem{Jones}{D.S.~Jones, ``Asymptotics of the hypergeometric function,'' \emph{Math.\ Meth.\ Appl.\ Sci.\ }\textbf{24} (2001) 369.}



%


\end{thebibliography}
\end{document}